\icmltitlerunning{Analysis of stochastic Lanczos quadrature for spectrum approximation}
\newcommand{\MYRETURN}{\item[\textbf{return}]}
\newtheorem{definition}{Definition}
\newtheorem{lemma}{Lemma}
\newtheorem{proposition}{Proposition}
\newtheorem{theorem}{Theorem}
\theoremstyle{remark}
\newtheorem{remark}{Remark}
\newtheoremstyle{citedremark}%
  {3pt}
  {3pt}
  {}
  {}
  {\itshape}
  {.}
  {.5em}
  {\thmname{#1} \thmnumber{#2} {\thmnote{\normalfont#3}}}
\theoremstyle{citedremark}
\newtheorem{cremark}[remark]{Remark}
\newtheoremstyle{cited}%
  {3pt}
  {3pt}
  {\itshape}
  {}
  {\bfseries}
  {.}
  {.5em}
  {\thmname{#1} \thmnumber{#2} {\thmnote{\normalfont#3}}}
\theoremstyle{cited}
\newtheorem{ctheorem}[theorem]{Theorem}
\newcommand{\bOne}{\ensuremath{\mathbbm{1}}}
\newcommand{\EE}{\ensuremath{\mathbb{E}}}
\newcommand{\PP}{\ensuremath{\mathbb{P}}}
\renewcommand{\d}[1]{\ensuremath{\mathrm{d}#1}}
\renewcommand{\vec}{\mathbf}
\newcommand{\R}{\mathbb{R}}
\newcommand{\tr}{\operatorname{tr}}
\newcommand{\T}{\textup{\textsf{T}}}
\newcommand{\ehat}{\hat{\vec{e}}}
\newcommand{\samp}[1]{\langle #1 \rangle}
\newcommand{\gq}[2]{[#2]_{#1}^{\textup{gq}}}
\newcommand{\gqupper}[2]{[#2]_{#1}^{\uparrow}}
\newcommand{\gqlower}[2]{[#2]_{#1}^{\downarrow}}
\newcommand{\W}{d_{\mathrm{W}}}
\newcommand{\KS}{d_{\mathrm{KS}}}
\newcommand{\mnote}[1]{}
\newcommand{\xnote}[1]{}
\crefname{ctheorem}{Theorem}{Theorems}
\crefname{appendix}{Supplement}{Supplements}
\begin{document}

\twocolumn[
\icmltitle{Analysis of stochastic Lanczos quadrature for spectrum approximation}



\icmlsetsymbol{equal}{*}

\begin{icmlauthorlist}
\icmlauthor{Tyler Chen}{uw}
\icmlauthor{Thomas Trogdon}{uw}
\icmlauthor{Shashanka Ubaru}{ibm}
\end{icmlauthorlist}

\icmlaffiliation{uw}{Department of Applied Mathematics, University of Washington, Seattle, Washington, USA}
\icmlaffiliation{ibm}{IBM T.J. Watson Research Center, Yorktown Heights, New York, USA}

\icmlcorrespondingauthor{Tyler Chen}{chentyl@uw.edu}

\icmlkeywords{spectrum approximation, Lanczos, trace estimation}

\vskip 0.3in
]



\printAffiliationsAndNotice{}  

\setlength{\abovedisplayskip}{.75em}%
\setlength{\belowdisplayskip}{.75em}%
\setlength{\abovedisplayshortskip}{.5em}%
\setlength{\belowdisplayshortskip}{.5em}%

\begin{abstract}
    The cumulative empirical spectral measure (CESM) \( \Phi[\vec{A}] : \mathbb{R} \to [0,1] \) of a \( n\times n \) symmetric matrix \( \vec{A} \) is defined as the fraction of eigenvalues of \( \vec{A} \) less than a given threshold, i.e., \( \Phi[\vec{A}](x) := \sum_{i=1}^{n} \frac{1}{n} \bOne[ \lambda_i[\vec{A}]\leq x] \).
Spectral sums \( \tr(f[\vec{A}]) \) can be computed as the Riemann--Stieltjes integral of \( f \) against \( \Phi[\vec{A}] \), so the task of estimating CESM arises frequently in a number of applications, including machine learning. 
We present an error analysis for stochastic Lanczos quadrature (SLQ).
    We show that SLQ obtains an approximation to the CESM within a Wasserstein distance of \( t \: | \lambda_{\textup{max}}[\vec{A}] - \lambda_{\textup{min}}[\vec{A}] | \) with probability at least \( 1-\eta \), by applying the Lanczos algorithm for \( \lceil 12 t^{-1} + \frac{1}{2} \rceil  \) iterations to \( \lceil 4 ( n+2 )^{-1}t^{-2} \ln(2n\eta^{-1}) \rceil \) vectors sampled independently and uniformly from the unit sphere.
We additionally provide (matrix-dependent) a posteriori error bounds for the Wasserstein and Kolmogorov--Smirnov distances between the output of this algorithm and the true CESM.
The quality of our bounds is demonstrated using numerical experiments. 
\end{abstract}

\section{Introduction}

Given an \( n\times n \) symmetric matrix \( \vec{A} \), the cumulative empirical spectral measure (CESM) \( \Phi[\vec{A}] : \R \to [0,1] \) gives the fraction of eigenvalues less than a given threshold. 
That is,
\begin{align*}
    \Phi[\vec{A}](x) 
    := 
    \sum_{i =1}^{n} \frac{1}{n}\bOne[ \lambda_i[\vec{A}] \leq x ],
\end{align*}
where \( \bOne[ \: \cdot \leq x] : \mathbb{R} \to \{0,1\} \) is the indicator function defined by \( \bOne[s \leq x] = 1 \) if \( s\leq x \) and \( \bOne[s \leq x] = 0 \) if \( s > x \).
The CESM contains all information about spectrum of \( \vec{A} \) and can therefore be used to compute any quantity depending on just the spectrum.
Conversely, computing the CESM requires exact knowledge of all the eigenvalues of \( \vec{A} \), which are expensive to compute.

For many applications, however, it suffices to provide a coarse estimate of the CESM.
In machine learning, approximate CESMs have found use in facilitating backpropagation through implicit likelihoods \cite{ramesh_lecun_18} as well as for for studying properties of Hessians during neural network training \cite{ghorbani_krishnan_xiao_19,papyan_19,yao_gholami_keutzer_mahoney_20}.
This provides insight into differences between various training approaches and/or network architectures.

In data science more broadly, approximate CESMs have become a popular approach for exploring properties of graphs and networks as well as for approximating fundamental quantities such as matrix norms, log-determinants, number of eigenvalues in an interval, etc. \cite{avron_10,dinapoli_polizzi_saad_16,ubaru_saad_seghouane_17,han_malioutov_avron_shin_17,xi_li_saad_18,musco_netrapalli_sidford_ubaru_woodruff_19,dong_benson_bindel_19}.
Approximate CESMs have also long been used in computational physics and chemistry to study various observables \cite{ducastelle_cryotlackmann_70,haydock_heine_kelly_75,wheeler_blumstein_72} and remain widely used in these fields today \cite{kpm_review_06,covaci_peeters_berciu_10,sbierski_trescher_bergholtz_brouwer_17,schnack_richter_steinigeweg_20}.

Finally, and as a result of their general usefulness, approximate CESMs have become the first stage of a range of algorithms for fundamental linear algebraic tasks including methods for computing matrix functions \cite{fan_shuman_ubaru_saad_19} and state of the art parallel eigensolvers \cite{polizzi_09,evsl_19}.

In this paper, we consider a well-known algorithm for computing an approximate CESM, which we refer to as stochastic Lanczos quadrature (SLQ).
The algorithm described in this paper and closely related methods have been used for decades to estimate spectral densities \cite{haydock_heine_kelly_72,haydock_heine_kelly_75,lambin_gaspard_82,benoit_royer_poussigue_92}, and like the Lanczos algorithm \cite{lanczos_50} on which they are based, they remain highly relevant today \cite{lin_saad_yang_16,ghorbani_krishnan_xiao_19,papyan_19}.


\subsection{Stochastic Lanczos Quadrature}

Using the standard definition of a matrix function,\hspace{-.15em}%
\footnote{For a symmetric matrix \( \vec{A} \) with eigenvalue decomposition \( \vec{A} = \vec{U} \vec{\Lambda} \vec{U}^\T \) and scalar function \( f: \R \to \R \), the matrix function \( f[\vec{A}] \) is defined by \( f[\vec{A}] := \vec{U} f[\vec{\Lambda}] \vec{U}^\T \), where \( f[\vec{\Lambda}] \) is the diagonal matrix with \( f \) applied entrywise to the diagonal entries of the matrix \( \vec{\Lambda} \).} %
for a symmetric matrix \( \vec{A} \), we denote by \( \bOne[\vec{A} \leq x] \) the matrix with the same eigenvectors as \( \vec{A} \), but whose eigenvalues are 0 or 1 depending on whether the corresponding eigenvalue of \( \vec{A} \) is below or above \( x \);
that is, \( \bOne[\vec{A} \leq x] \) is the orthogonal projector onto the eigenspace associated to eigenvalues \( \lambda_i[\vec{A}] \) such that \( \lambda_i[\vec{A}] \leq x \).
Thus, \( \Phi[\vec{A}](x) = n^{-1} \tr(\bOne[\vec{A} \leq x]) \).

With this definition in place, we define the weighted CESM,
\begin{align*}
    \Psi[\vec{A},\vec{v}](x) := \vec{v}^\T \bOne[\vec{A} \leq x] \vec{v}.
\end{align*}
If \( \vec{v} \sim \mathcal{U}(\mathcal{S}^{n-1}) \), where \( \mathcal{U}(\mathcal{S}^{n-1}) \) is the uniform distribution on the unit sphere, then the weighted CESM has the desirable properties (i) that it is an unbiased estimator for \( \Phi[\vec{A}](x) \), and (ii) that it defines a cumulative probability distribution function; i.e. \( \EE[\Psi[\vec{A},\vec{v}](x)] = \Phi[\vec{A}](x) \) and \( \Psi[\vec{A},\vec{v}] : \R \to [0,1] \) is weakly increasing, right continuous, and
\begin{align*}
    \lim_{x\to-\infty} \Psi[\vec{A},\vec{v}](x) = 0
    ,&&
    \lim_{x\to\infty} \Psi[\vec{A},\vec{v}](x) = 1.
\end{align*}

Next, we consider the degree \( k \) Gaussian quadrature rule \( \gq{k}{\Psi[\vec{A},\vec{v}]} \) for \( \Psi[\vec{A},\vec{v}] \).
In general, a Gaussian quadrature rule for a distribution function can be computed using the Stieltjes procedure, which for distributions of the form \( \Psi[\vec{A},\vec{v}] \), is equivalent to the Lanczos algorithm \cite{gautschi_04,golub_meurant_09}.
Specifically, if \( [\vec{T}]_{:k,:k} \) is the symmetric tridiagonal matrix obtained by running Lanczos on \( \vec{A} \) and \( \vec{v} \) for \( k \) steps, then
\begin{align*}
    \gq{k}{\Psi[\vec{A},\vec{v}]} = \Psi([\vec{T}]_{:k,:k},\ehat)
\end{align*}
where \( \ehat = [1,0,\ldots, 0]^\T \).

By repeating this process over multiple samples and averaging, we arrive at SLQ, outlined in \cref{alg:SLQ}.
\begin{algorithm}[htb!]
    \caption{Stochastic Lanczos Quadrature}\label{alg:SLQ}
    \begin{algorithmic}
    \INPUT {$\vec{A}$, $n_{\textup{v}}$, $k$}
    \FOR {\( i=1,2,\ldots, n_{\textup{v}} \)}
        \STATE Sample \( \vec{v}_i \sim \mathcal{U}(\mathcal{S}^{n-1}) \) (and define \( \Psi_i = \Psi(\vec{A},\vec{v}_i) \))
        \STATE Run Lanczos on \( \vec{A},\vec{v}_i \) for \( k \) steps to compute \( [\vec{T}_i]_{:k,:k} \)

        \STATE Define  \( \gq{k}{\Psi_i} = \Psi[[\vec{T}_i]_{:k,:k},\ehat] \)
    \ENDFOR
    \MYRETURN \( \samp{ \gq{k}{\Psi_i} } := \frac{1}{n_{\textup{v}}} \sum_{i=1}^{n_{\textup{v}}} \gq{k}{\Psi_i} \)
\end{algorithmic}
\end{algorithm}

SLQ is computationally efficient.
In particular, all samples can be computed in parallel on separate machines or on a single machine using blocked linear algebra routines.
Moreover, the algorithm is matrix free in that we only require a method to compute the map \( \vec{v} \mapsto \vec{A} \vec{v} \), the cost of which we denote by \( T_{\textup{mv}} \).
This is particularly important for large dense matrices, where the \( O(n^2) \) storage required to keep every entry of \( \vec{A} \) may be intractable.
In many cases, such as with Hessians from the training of neural networks, matrix-vector products can be computed implicitly using only \( O(n) \) storage \cite{pearlmutter_94}.
Similarly, if \( \vec{A} \) is sparse or structured, this map may be evaluated faster than the \( O(n^2) \) computation cost for an arbitrary matrix vector product; e.g. \( O( \operatorname{nnz}[\vec{A}] ) \) for a sparse matrix or \( O(n\ln n) \) for a circulant matrix.

\subsection{Discussion of results}

Our first main result is a runtime guarantee for SLQ.
In particular, we show that if \( n_{\textup{v}} > 4 ( n+2 )^{-1} t^{-2} \ln(2n\eta^{-1}) \) and \( k > 12 t^{-1} + \frac{1}{2} \), then 
\begin{align*}
    \PP\big[ \W( \Phi[\vec{A}], \samp{\gq{k}{\Psi_i} } ) > t I[\vec{A}]  \big] < \eta,
\end{align*}
where \( I[\vec{A}] := | \lambda_{\textup{max}}[\vec{A}] - \lambda_{\textup{min}}[\vec{A}] | \) and \( \W(\cdot,\cdot) \) denotes the Wasserstein distance between two distribution functions as in \cref{def:wasserstein} below.
This implies that as \( n\to\infty \), for \( t \gg  n^{-1/2}  \), SLQ has a runtime of \( O( T_{\textup{mv}} t^{-1} \log( t^{-2} \eta^{-1}) ) \).
This bound is nearly tight in the sense that for any \( t \in (0,1) \), there exists a matrix of size \( \lceil (4t)^{-1} \rceil \) such that at least \( (8t)^{-1} \) matrix vector products are required to obtain an output with Wasserstein distance less than \( t I[\vec{A}] \).

The second main result is an a posteriori upper bound for Wasserstein and Kolmogorov--Smirnov (KS) distances, which take into account spectrum dependent features such as clustered or isolated eigenvalues.

Finally, in proving these results, we show that if \( n_{\textup{v}} > (n+2)^{-1} t^{-2}  \ln(2 \eta^{-1}) \) then, for any \( x\in\R \),
\begin{align}
    \label{eqn:intro_sample_ave}
    \PP \left[ \left| \Phi[\vec{A}](x) - \left( \frac{1}{n_{\textup{v}}}\sum_{i=1}^{n_{\textup{v}}} \vec{v}_i^\T \bOne[\vec{A} \leq x] \vec{v}_i \right) \right| > t \right] < \eta.
\end{align}
This is applicable to the analysis of a range of algorithms beyond SLQ.

\subsection{Related work}

\mnote{this is really three fairly separate points: other spectrum approximation methods, the quality of \( \Psi[\vec{A},\vec{v}] \), and relation to other SLQ. Not sure about the order}

A variety of algorithms for approximating the CESM have been developed; see \cite{fischer_11,kpm_review_06,lin_saad_yang_16,large_matrices_density_review_18,cohensteinen_kong_sohler_valiant_18} and the references therein for a more complete overview.
By far, the most popular algorithms are the kernel polynomial method (KPM) \cite{kpm_review_06} and SLQ.
The two algorithms differ primarily in how they approximate the weighted CESM \( \Psi[\vec{A},\vec{v}] \), and as a result, our analysis of the estimator \( \Psi[\vec{A},\vec{v}] \) for \( \Phi[\vec{A}]\) applies to the KPM as well.
The main advantages of SLQ are that it is adaptive to the spectrum of \( \vec{A} \) (i.e. it automatically detects features such as gaps in the spectrum, isolated or clustered eigenvalues, etc.) and that it produces an output which is a probability distribution function.
However, in doing so, SLQ requires the computation of inner products, which may be costly, especially in a distributed computing environment \cite{lin_saad_yang_16}.
Moreover, the behavior of the Lanczos method is not always straightforward in finite precision.

We remark that the KPM, both with exact and inexact matrix-vector products, has recently been analyzed \cite{braverman_krishnan_musco_21}.
The analysis for KPM with exact matrix-vector products yields similar rates to our analysis for SLQ, but the sample complexity is given in terms of unspecified universal constants and has a polynomially worse dependence on the accuracy parameter \( t \).
We believe this is an artifact of analysis not an inherent property of the KPM, but to be certain a full side-by-side comparison of the algorithms is needed.

Tail bounds similar to \eqref{eqn:intro_sample_ave} have been derived in several contexts.
First, while explicit constants are not specified, the same \( n^{-1} t^{-2} \ln(\eta^{-1}) \) dependence for \( n_{\textup{v}} \) is implied by \citep[Lemma 4.5]{deift_trogdon_20}.
Second, if the \( \vec{v}_i \) are replaced by \emph{unnormalized} Gaussian vectors (with mean zero and variance \( n^{-1} \)), then well known bounds for trace estimation \cite{avron_toledo_11,roostakhorasani_ascher_14} yield similar rates in terms of explicit constants. 
However, the weighed CESM corresponding to an unnormalized sample will not be a probability distribution function.

Finally, the algorithm studied in this paper is closely related to an algorithm, also commonly referred to as SLQ (but called gSLQ in this paper to avoid confusion), for approximating spectral sums \( \tr( f[\vec{A}] ) = \sum_{i=1}^{n} f(\lambda_i[\vec{A}]) \) \cite{bai_fahey_golub_96,bai_golub_96}.
Indeed, the SLQ studied here is a special case of gSLQ, with \( f(s) = \bOne[s\leq x] \).
However, SLQ is in fact equivalent to gSLQ the sense that the gSLQ approximation to \( \tr(f[\vec{A}]) \) can be obtained by computing a Riemann--Stieltjes  integral of \( f \) against the output of \cref{alg:SLQ}.
More generally, matrix function trace estimation is closely related to CESM estimation due to the fact that
\begin{align*}
    \tr( f[\vec{A}] )
    = n \int f(s) \d\Phi(s).
\end{align*}

In \cite{ubaru_chen_saad_17}, the convergence of gSLQ when \( f \) is smooth or analytic is studied.
As a result, this analysis is not immediately applicable to CESM estimation itself, since \( \bOne[\:\cdot \leq x] \) is discontinuous.
One possibility is to solve a relaxed problem where \( \bOne[\:\cdot \leq x] \) is replaced with a smoothed approximation such as a shifted hyperbolic tangent or the CDF of any continuous random variable with small enough variance.
This results in an approximation to the CESM equivalent to the convolution of the CESM with a smoothing kernel \cite{ubaru_chen_saad_17,han_malioutov_avron_shin_17,ghorbani_krishnan_xiao_19}.

If the CESM can be reasonably smoothed, then such an approach works well. 
However, it is often the case that the ``variance'' of the smoothing kernel, in order to preserve certain aspects of the CESM, such as large jumps due to clustered eigenvalues, has to be very small.
In such cases, Gaussian quadrature bounds for smooth functions are often useless, necessitating bounds such as the ones presented in this paper; see \cref{sec:GQ_smooth_bounds} for a detailed discussion.

\subsection{Preliminaries}

Matrices are denoted by bold uppercase letters and vectors are denoted by bold lowercase letters.
The first canonical unit vector \( [1,0,\ldots, 0]^\T \), of size determined by context, is denoted \( \ehat \).
The set of all eigenvalues of a \( d\times d \) symmetric matrix \( \vec{B} \) is denoted \( \Lambda[\vec{B}] \), and the individual eigenvalues are \( \lambda_{\textup{min}}[\vec{B}] = \lambda_d[\vec{B}] \leq \cdots \leq \lambda_1[\vec{B}] = \lambda_{\textup{max}}[\vec{B}] \).
Unless otherwise stated, \( \vec{A} \) is \( n\times n \) symmetric matrix.

We denote the \( i \)-th entry of a vector \( \vec{v} \) by \( [\vec{v}]_i \) and the submatrix consisting of rows \( r \) to \( r' \) and columns \( c \) to \( c' \) by \( [\vec{B}]_{r:r',c:c'} \). 
If any of these indices are equal to \( 1 \) or \( n \), they may be omitted.
If \( r = r' \) or \( c = c' \), then we will simply write this index once.
Thus, \( [\vec{B}]_{:,:2} \) denotes the first two columns of \( \vec{B} \), and \( [\vec{B}]_{3,:} \) denotes the third row of \( \vec{B} \).

For some positive integer \( n_{\textup{v}} \) and a set of values \( \{ x_i \}_{i=1}^{n_{\textup{v}}} \), the sample average \( \frac{1}{n_{\textup{v}}}\sum_{i=1}^{n_{\textup{v}}} x_i \) is denoted \( \samp{x_i} \).

\begin{definition}
Let \( \mu \) and \( \nu \) be two probability distribution functions.
We say the moments of \( \mu \) and \( \nu \) are equal up to degree \( k-1 \) if for all polynomials \( p \) of degree \( < k \),
\begin{align*}
    \int p(x) \d{\mu(x)}
    = \int p(x) \d{\nu(x)}.
\end{align*}
\end{definition}

We also have the standard definition of Kolmogorov-Smirnov and Wasserstein distances.
\begin{definition}
Let \( \mu \) and \( \nu \) be two probability distribution functions.
The Kolmogorov--Smirnov distance between \( \mu \) and \( \nu \), denoted \( \KS(\mu,\nu) \), is defined by
\begin{align*}
    \KS(\mu,\nu) := \sup_{x} | \mu(x) - \nu(x) |.
\end{align*}
\end{definition}
\begin{definition}
\label{def:wasserstein}
Let \( \mu \) and \( \nu \) be two probability distribution functions.
The Wasserstein (earth mover) distance between \( \mu \) and \( \nu \), denoted \( \W(\mu,\nu) \), is defined by
\begin{align*}
    \W(\mu,\nu) := \int | \mu(x) - \nu(x) | \d{x}.
\end{align*}
\end{definition}

\section{The Lanczos algorithm}

The primary computational cost of SLQ is due to the Lanczos algorithm \cite{lanczos_50}.
The Lanczos algorithm is typically viewed as a procedure for constructing an orthonormal basis \( [\vec{Q}]_{:,:k} := [ \vec{q}_1 , \ldots , \vec{q}_k ] \) for the Krylov subspace
\begin{align*}
\mathcal{K}_k(\vec{A},\vec{v})
= \operatorname{span}( \vec{v}, \vec{A}\vec{v}, \ldots, \vec{A}^{k-1}\vec{v} ).
\end{align*}
This can be done by a Gram--Schmidt-like process, where \( \vec{A} \vec{q}_{k} \) is orthogonalized against previous basis vectors \( \{ \vec{q}_j \}_{j=1}^{k} \), which results in a factorization
\begin{align*}
    \vec{A} [\vec{Q}]_{:,:k} = [\vec{Q}]_{:,:k} [\vec{T}]_{:k,:k} + \beta_k \vec{q}_{k+1} \vec{e}_k^{\T}
\end{align*}
where \( [\vec{T}]_{:k,:k} \) is upper-Hessenberg.
However, since \( [\vec{T}]_{:k,:k} = [\vec{Q}]_{:,:k}^\T \vec{A} [\vec{Q}]_{:,:k} \) is symmetric, then \( [\vec{T}]_{:k,:k} \) is actually tridiagonal.
Thus, \( \vec{A} \vec{q}_k \) will be orthogonal to \( \vec{q}_j \), \( j < k -1 \), so we only need to orthogonalize against \( \vec{q}_k \) and \( \vec{q}_{k-1} \) in each iteration.
As a result, the runtime of \cref{alg:lanczos} is \( O(k (T_{\textup{mv}} + n ) ) \) and the required storage is \( O(n) \).

\begin{algorithm}[h]
\caption{Lanczos}\label{alg:lanczos}
    \begin{algorithmic}
    \INPUT {$\vec{A}$,$\vec{v}$,$k$}
    \STATE \( \vec{q}_0 = \vec{0} \),
    \( \beta_{-1} = 0 \),
    \( \vec{q}_1  = \vec{v} / \| \vec{v} \| \)
    \FOR {\( i=0,1,\ldots, k-1 \)}
        \STATE \( \tilde{\vec{q}}_{i+1} = \vec{A} \vec{q}_{i} - \beta_{i-1} \vec{q}_{i-1} \)
        \STATE \( \alpha_i = \langle \tilde{\vec{q}}_{i+1}, \vec{q}_i \rangle \)
        \STATE \( \tilde{\vec{q}}_{i+1} = \tilde{\vec{q}}_{i+1} - \alpha_i \vec{q}_i \)
        \STATE \algorithmicif~`reorthogonlization'~\algorithmicthen~
        orthogonalize \( \tilde{\vec{q}}_{i+1} \) against \( \{\vec{q}_j\}_{j=1}^{i-1} \)
        \algorithmicendif \label{alg:reorth_line}
        \STATE \( \beta_{i} = \| \vec{q}_{i+1} \| \)
        \STATE \( \vec{q}_{i+1} = \tilde{\vec{q}}_{i+1} / \beta_{i} \)
    \ENDFOR
    \MYRETURN \( [\vec{T}]_{:k,:k} \) (diagonal \( [\alpha_1, \ldots, \alpha_k] \) and the sub and super diagonal \( [\beta_1, \ldots, \beta_{k-1}] \))
\end{algorithmic}
\end{algorithm}

\begin{remark}
\label{thm:exact_lanczos}
    If \cref{alg:lanczos} is run for \( k = n \) iterations on any right hand side with nonzero projection onto each eigenvector, the tridiagonal matrix \( [\vec{T}]_{:n,:n} \) produced will have the same eigenvalues as \( \vec{A} \).
Thus the CESM can be computed deterministically in time \( O( n T_{\textup{mv}} + n^2 ) \).
\end{remark}

\begin{cremark}[{\citep{gu_eisenstat_95}}]
The eigenvalues and first components of eigenvectors of a real symmetric tridiagonal matrix of size \( k\times k \) can be computed in \( O(k^2) \) operations.
\end{cremark}

\begin{remark}
    Without reorthogonalization in \cref{alg:lanczos}, the runtime of SLQ \cref{alg:SLQ} is \( O( n_{\textup{v}}k (T_{\textup{mv}} + n) ) \) and the required storage is \( O(n) \).
    With reorthogonalization, the runtime is \( O( n_{\textup{v}} k ( T_{\textup{mv}}  + nk) ) \) and the required storage is \( O(nk) \).
\end{remark}

\begin{remark}
    In exact arithmetic, the reorthogonalization step of \cref{alg:lanczos} is unnecessary as \( \tilde{\vec{q}}_{i+1} \) is already orthogonal to \( \{ \vec{q}_j \}_{j=1}^{i-1} \).
However, in finite precision arithmetic this orthogonality may be lost.
Our bounds, as well as the bounds for gSLQ \cite{ubaru_chen_saad_17}, are derived based on exact arithmetic theory, so it must be assumed that \( [\vec{T}]_{:k,:k} \) is computed using some implementation which produces an output close to the exact arithmetic output.
The easiest way to ensure this is with full reorthogonalization, although other more advanced schemes have been considered.

For the task of computing CESMs, some practitioners \cite{papyan_19} have noted the algorithm still works without reorthogonalization.
In \cref{sec:finite_precision}, we provide an overview of existing analysis on the Lanczos algorithm in finite precision \cite{paige_76,paige_80,greenbaum_89} which provides a high level explanation as to why SLQ still works without reorthogonalization.
\end{remark}

\section{Results}

We now state the main results. 
\begin{restatable}{theorem}{wassersteinapriori}
\label{thm:wasserstein_apriori}
Given \( 0<\eta<1 \) and \( t>0 \), set 
    \( n_{\textup{v}} > 4 ( n+2 )^{-1} t^{-2} \ln(2n\eta^{-1}) \) and \( k > 12 t^{-1} + \frac{1}{2} \).
Then, \cref{alg:SLQ} will output an estimate \( \samp{ \gq{k}{ \Psi_i } } \) satisfying, 
\begin{align*}
    \PP\big[ \W( \Phi[\vec{A}], \samp{\gq{k}{\Psi_i} } ) > t I[\vec{A}]  \big] < \eta.
\end{align*}
where \( I[\vec{A}] := | \lambda_{\textup{max}}[\vec{A}] - \lambda_{\textup{min}}[\vec{A}] | \).
\end{restatable}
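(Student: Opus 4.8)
The plan is to bound the total error by the triangle inequality, splitting it into a \emph{stochastic} contribution from replacing \( \Phi[\vec{A}] \) by the averaged weighted CESM \( \samp{\Psi_i} \), and a \emph{deterministic} contribution from replacing each \( \Psi_i \) by its degree-\(k\) Gaussian quadrature \( \gq{k}{\Psi_i} = \Psi[[\vec{T}_i]_{:k,:k},\ehat] \). Since \( \W \) is a metric with \( \W(\mu,\nu) = \int |\mu(x)-\nu(x)|\,\d{x} \), and since \( \big|\frac{1}{n_{\textup{v}}}\sum_i \big(\Psi_i(x)-\gq{k}{\Psi_i}(x)\big)\big| \le \frac{1}{n_{\textup{v}}}\sum_i \big|\Psi_i(x)-\gq{k}{\Psi_i}(x)\big| \) pointwise, integrating gives
\[
 \W\big(\Phi[\vec{A}],\samp{\gq{k}{\Psi_i}}\big)
 \;\le\; \W\big(\Phi[\vec{A}],\samp{\Psi_i}\big)
 \;+\; \frac{1}{n_{\textup{v}}}\sum_{i=1}^{n_{\textup{v}}}\W\big(\Psi_i,\gq{k}{\Psi_i}\big).
\]
I will bound each of the two terms on the right by \( \tfrac{t}{2}\, I[\vec{A}] \).

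For the deterministic term I would prove, as a standalone lemma, that any two probability distribution functions supported in an interval \( [a,b] \) whose moments agree up to degree \( 2k-1 \) satisfy \( \W(\mu,\nu) \le \tfrac{C\,(b-a)}{2k-1} \) for an absolute constant \( C \) (the hypothesis \( k > 12 t^{-1} + \tfrac12 \), i.e.\ \( 2k-1 > 24\,t^{-1} \), corresponds to \( C = 12 \)). The proof uses the Kantorovich dual \( \W(\mu,\nu) = \sup_{\mathrm{Lip}(g)\le 1}\big|\int g\,\d{\mu}-\int g\,\d{\nu}\big| \) over \( [a,b] \), replaces a \( 1 \)-Lipschitz \( g \) by a best polynomial approximation of degree \( \le 2k-1 \) to \( L^\infty([a,b]) \)-accuracy \( O((b-a)/k) \) (a Jackson-type estimate), and uses that moment matching annihilates the polynomial part. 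This applies to \( \Psi_i \) and \( \gq{k}{\Psi_i} \) with \( a = \lambda_{\textup{min}}[\vec{A}] \) and \( b = \lambda_{\textup{max}}[\vec{A}] \): the quadrature nodes are the eigenvalues of \( [\vec{Q}]_{:,:k}^{\T}\vec{A}[\vec{Q}]_{:,:k} \), i.e.\ Rayleigh quotients of \( \vec{A} \), hence lie in \( [a,b] \); and the degree-\(k\) Gaussian rule matches the moments of \( \Psi_i \) up to degree \( 2k-1 \) by construction. Consequently \( \W(\Psi_i,\gq{k}{\Psi_i}) \le \tfrac{t}{2}\, I[\vec{A}] \) for every \( i \), deterministically, so the whole quadrature term is \( \le \tfrac{t}{2}\, I[\vec{A}] \).

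For the stochastic term I would use that \( \Phi[\vec{A}] \) and each \( \Psi_i \), hence \( \samp{\Psi_i} \), are right-continuous step functions with jumps only at the (at most \( n \) distinct) eigenvalues of \( \vec{A} \), equal to \( 0 \) for \( x < \lambda_{\textup{min}}[\vec{A}] \) and to \( 1 \) for \( x \ge \lambda_{\textup{max}}[\vec{A}] \). Hence \( \W(\Phi[\vec{A}],\samp{\Psi_i}) = \int_{\lambda_{\textup{min}}[\vec{A}]}^{\lambda_{\textup{max}}[\vec{A}]} |\Phi[\vec{A}](x) - \samp{\Psi_i}(x)|\,\d{x} \), and on each interval between consecutive distinct eigenvalues the integrand is constant and equal to its value at the left endpoint, which is an eigenvalue. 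Applying the pointwise tail bound \eqref{eqn:intro_sample_ave} with accuracy \( t/2 \) and confidence \( \eta/n \) at each of the \( \le n \) distinct eigenvalues requires exactly \( n_{\textup{v}} > (n+2)^{-1}(t/2)^{-2}\ln(2n\eta^{-1}) = 4(n+2)^{-1}t^{-2}\ln(2n\eta^{-1}) \) samples, i.e.\ the stated hypothesis; a union bound then shows that with probability at least \( 1-\eta \) we have \( |\Phi[\vec{A}](x)-\samp{\Psi_i}(x)| \le t/2 \) for all \( x \in [\lambda_{\textup{min}}[\vec{A}],\lambda_{\textup{max}}[\vec{A}]] \), so \( \W(\Phi[\vec{A}],\samp{\Psi_i}) \le \tfrac{t}{2}\, I[\vec{A}] \) on that event. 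Combining the two bounds through the displayed triangle inequality gives \( \W(\Phi[\vec{A}],\samp{\gq{k}{\Psi_i}}) \le t\, I[\vec{A}] \) on this event, which is the claim.

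The main obstacle is the deterministic quadrature lemma with the explicit constant: passing cleanly from ``moments match up to degree \( 2k-1 \)'' to a \( \tfrac{C(b-a)}{2k-1} \) bound on \( \W \) requires care with the best-polynomial-approximation constant for Lipschitz functions and with converting the Kantorovich dual back to the \( \int|\mu-\nu| \) form, and one must also dispose of the harmless edge case of early Lanczos breakdown, where \( \Psi_i \) is already finitely supported and is reproduced exactly by its quadrature. Everything else — the triangle and convexity inequalities, the reduction of the integral to finitely many eigenvalues via the step-function structure, and the union bound calibrated to the stated \( n_{\textup{v}} \) — is routine once \eqref{eqn:intro_sample_ave} is available.
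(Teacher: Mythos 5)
Your proposal is correct and follows essentially the same route as the paper: the triangle-inequality split into a stochastic term (controlled by the pointwise concentration bound union-bounded over the at most \(n\) eigenvalues, i.e.\ the paper's \cref{thm:CESM_estimator_prob}, converted to Wasserstein via the interval length) and a deterministic quadrature term controlled by a moment-matching/Jackson bound with constant \(12\), which is exactly the paper's \cref{thm:wasserstein_moments}. The only cosmetic difference is that you phrase the moment lemma via the Kantorovich dual over \(1\)-Lipschitz functions, whereas the paper constructs the explicit \(1\)-Lipschitz test function \(f(x)=-\int_a^x\operatorname{sign}(\mu-\nu)\,\d{s}\) and integrates by parts before applying Jackson's theorem — the same underlying argument.
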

\Cref{thm:wasserstein_apriori} is essentially a direct consequence of following theorem of the average of weighted CESMs and a straightforward bound on the Wasserstein distances of distribution functions with matching moments.

\begin{restatable}{theorem}{estimatorprob}
\label{thm:CESM_estimator_prob}
Given a positive integer \( n_{\text{v}} \), suppose \( \{ \vec{v}_i \}_{i=1}^{n_{\textup{v}}} \stackrel{\textup{iid}}{\sim} \mathcal{U}(\mathcal{S}^{n-1}) \) and define \( \Psi_i = \Psi(\vec{A},\vec{v}_i) \).
Then, for all \( x \in \R \) and \( t > 0 \),
\begin{align*}
    \PP\big[ \left| \Phi[\vec{A}](x) - \samp{ \Psi_i(x) } \right| > t \big]
    &\leq
    2 \exp\left( - n_{\textup{v}} (n+2) t^2 \right)
    \\
    \PP \big[ \KS (\Phi[\vec{A}], \samp{ \Psi_i } ) > t \big]
    &\leq
    2 n \exp\left( - n_{\textup{v}} (n+2) t^2 \right).
\end{align*}
\end{restatable}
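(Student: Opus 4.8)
The plan is to fix a threshold $x\in\R$, prove the pointwise concentration statement first, and then upgrade it to the Kolmogorov--Smirnov bound by a union bound over the eigenvalues of $\vec{A}$.

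First I would identify the law of $\Psi_i(x)$. Writing $\vec{P}:=\bOne[\vec{A}\leq x]$, an orthogonal projector of some rank $m=m(x)$ with $m/n=\Phi[\vec{A}](x)$, we have $\Psi_i(x)=\vec{v}_i^{\T}\vec{P}\vec{v}_i=\|\vec{P}\vec{v}_i\|^2$. When $m\in\{0,n\}$ this is deterministically equal to $\Phi[\vec{A}](x)$ and the pointwise bound is trivial, so assume $0<m<n$. By rotational invariance of $\mathcal{U}(\mathcal{S}^{n-1})$ we may take $\vec{P}$ to be a coordinate projector; writing $\vec{v}_i=\vec{g}_i/\|\vec{g}_i\|$ with $\vec{g}_i$ a standard Gaussian vector expresses $\Psi_i(x)$ as $U/(U+V)$ with $U\sim\chi^2_m$, $V\sim\chi^2_{n-m}$ independent, i.e.\ $\Psi_i(x)\sim\mathrm{Beta}(m/2,(n-m)/2)$, whose mean is $m/n=\Phi[\vec{A}](x)$ (consistent with the unbiasedness noted in the preliminaries).

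The crux of the pointwise bound is a quantitative sub-Gaussianity statement for this Beta law: a $\mathrm{Beta}(\alpha,\beta)$ random variable is sub-Gaussian about its mean with variance proxy $\tfrac{1}{4(\alpha+\beta+1)}$ --- a known result (Marchal and Arbel) that can also be obtained directly by bounding the moment generating function of $\mathrm{Beta}(\alpha,\beta)$, which is a confluent hypergeometric function. The point is that for \emph{every} $x$ one has $\alpha+\beta=n/2$, so $\Psi_i(x)-\Phi[\vec{A}](x)$ is centered and sub-Gaussian with variance proxy $\tfrac{1}{2(n+2)}$ uniformly in $x$. Since $\samp{\Psi_i(x)}$ averages $n_{\textup{v}}$ independent copies, $\samp{\Psi_i(x)}-\Phi[\vec{A}](x)$ is sub-Gaussian with proxy $\tfrac{1}{2n_{\textup{v}}(n+2)}$, and the standard two-sided sub-Gaussian tail bound gives exactly $2\exp(-n_{\textup{v}}(n+2)t^2)$. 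I expect this step to be the main obstacle: the naive approach via Hoeffding's inequality (each $\Psi_i(x)\in[0,1]$) only yields a variance proxy of $\tfrac14$, i.e.\ the exponent $2n_{\textup{v}}t^2$, and recovering the factor $n+2$ requires genuinely exploiting the variance / MGF of $\Psi_i(x)$, not just its boundedness.

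Finally, for the Kolmogorov--Smirnov bound I would note that $\Phi[\vec{A}]$ and each $\Psi_i$ --- hence also $\samp{\Psi_i}$ --- are right-continuous step functions whose jumps occur only at the at most $n$ distinct eigenvalues of $\vec{A}$; the difference $\Phi[\vec{A}]-\samp{\Psi_i}$ vanishes on $(-\infty,\lambda_{\textup{min}}[\vec{A}])$ and on $[\lambda_{\textup{max}}[\vec{A}],\infty)$ and is constant on every interval between consecutive eigenvalues, so its supremum over $x\in\R$ is attained at one of at most $n$ points. Applying the pointwise bound at each of these points and taking a union bound multiplies the failure probability by at most $n$, giving $2n\exp(-n_{\textup{v}}(n+2)t^2)$ and completing the proof.
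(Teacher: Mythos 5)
Your proposal is correct and follows essentially the same route as the paper: identify $\Psi_i(x)$ as a $\operatorname{Beta}(m/2,(n-m)/2)$ variable via rotational invariance and the chi-square ratio, invoke the Marchal--Arbel sub-Gaussian bound with variance proxy $(2(n+2))^{-1}$ (uniform in $x$ since $\alpha+\beta=n/2$), apply the standard sub-Gaussian tail bound for the average of $n_{\textup{v}}$ iid copies, and finish the Kolmogorov--Smirnov bound with a union bound over the at most $n$ eigenvalues where the piecewise-constant difference can attain its supremum. Your explicit handling of the degenerate cases $m\in\{0,n\}$ is a minor addition the paper leaves implicit, but the argument is otherwise the same.
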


\begin{restatable}{proposition}{wassersteinmoments}
\label{thm:wasserstein_moments}
Suppose \( \mu \) and \( \nu \) are two probability distribution functions constant on the complement of \( [a,b] \) whose moments are equal up to degree \( s \).
Then, 
\begin{align*}
    \W(\mu,\nu) 
    \leq  2 (b-a) (1 + \pi^2/2) s^{-1}
    < 12 (b-a) s^{-1}.
\end{align*}
\end{restatable}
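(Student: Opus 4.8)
The plan is to turn the Wasserstein distance into a best-polynomial-approximation problem for a Lipschitz function. Set $h := \mu - \nu$. Because $\mu$ and $\nu$ are probability distribution functions that are constant on $\R \setminus [a,b]$, the function $h$ is right-continuous, is of bounded variation with total variation at most $2$ (being the difference of two monotone functions of variation $1$), and vanishes outside $[a,b]$; by \Cref{def:wasserstein}, $\W(\mu,\nu) = \int_a^b |h(x)|\,\d{x}$. Integrating by parts, with no boundary contribution since $\mu,\nu \to 0$ at $-\infty$ and $\mu,\nu \to 1$ at $+\infty$, the hypothesis that the moments of $\mu$ and $\nu$ agree up to degree $s$ becomes: $\int_a^b q(x)\,h(x)\,\d{x} = 0$ for every polynomial $q$ of degree at most $s-1$.

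Now fix any polynomial $r$ of degree at most $s$; then $r'$ has degree at most $s-1$, so the orthogonality just derived gives
\begin{align*}
    \W(\mu,\nu) = \int_a^b h(x)\operatorname{sgn}(h(x))\,\d{x} = \int_a^b h(x)\bigl(\operatorname{sgn}(h(x)) - r'(x)\bigr)\,\d{x}.
\end{align*}
Let $F(x) := \int_a^x \operatorname{sgn}(h(t))\,\d{t}$, a $1$-Lipschitz function, and let $P := F - r$, an antiderivative of $\operatorname{sgn}(h) - r'$. A second integration by parts rewrites the right-hand side as $-\int_{(a,b]} P\,\d{h}$ plus the boundary term $-h(a)P(a)$ (there is no contribution at $b$, since $h(b) = 0$). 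Since $|h(a)| \le 1$ and the total variation of $h$ restricted to $(a,b]$ is at most $2 - |h(a)|$, this yields $\W(\mu,\nu) \le 2 \sup_{[a,b]} |P|$; taking the infimum over $r$,
\begin{align*}
    \W(\mu,\nu) \le 2\,E_s(F), \qquad E_s(F) := \inf_{\deg r \le s}\ \sup_{x \in [a,b]} |F(x) - r(x)|.
\end{align*}
(This is also the Kantorovich--Rubinstein dual bound for $\W$ combined with the fact that $|\mu| + |\nu|$ has total mass $2$.)

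It remains to establish the Jackson-type inequality that a $1$-Lipschitz function on an interval of length $\ell$ is uniformly approximable by polynomials of degree at most $s$ to within $(1 + \pi^2/2)\,\ell / s$; applied to $F$ on $[a,b]$ this gives $\W(\mu,\nu) \le 2(1 + \pi^2/2)(b-a)/s$, and $2(1 + \pi^2/2) = 2 + \pi^2 < 12$ gives the stated form. To prove the inequality I would substitute $x = \cos\theta$, which sends a $1$-Lipschitz function on $[-1,1]$ to an even $2\pi$-periodic $1$-Lipschitz function of $\theta$, approximate the latter by its convolution with an even Jackson-type kernel of degree at most $s$ (which is again even, hence a polynomial of degree at most $s$ in $x$), and bound the first absolute moment of the kernel; the affine map from $[-1,1]$ to the general interval then contributes the factor $\ell$.

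The crux is this last step: extracting the explicit constant $1 + \pi^2/2$ — rather than an unspecified absolute constant — from the kernel moment estimate. A plain Fej\'er-type average does not suffice, since it only yields an $O(s^{-1}\log s)$ rate, so a higher-order (Jackson) kernel and a careful moment computation are needed. A secondary, purely technical, point is the boundary bookkeeping in the second integration by parts: $h$ may genuinely fail to vanish at $x = a$ (both $\mu$ and $\nu$ may carry an atom there), and handling that jump against the total-variation budget of $2$ is what the small additive term in the constant pays for.
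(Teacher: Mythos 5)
Your proposal is correct and follows essentially the same route as the paper: integrate by parts to trade the moment-matching hypothesis for orthogonality, reduce $\W(\mu,\nu)$ to $2\inf_{\deg r\le s}\|F-r\|_\infty$ for the $1$-Lipschitz primitive $F$ of $\operatorname{sgn}(\mu-\nu)$, and invoke Jackson's theorem with the explicit constant $1+\pi^2/2$. The only difference is bookkeeping (you integrate by parts twice and track the possible jump at $a$ against the total-variation budget of $2$, where the paper does a single integration by parts and bounds $\int (f-p)\,\d(\mu-\nu)$ by $2\|f-p\|_\infty$ directly), and the Jackson inequality you flag as the crux need not be reproved --- the paper simply cites it (Atkinson--Han, Theorem 3.7.2), which is equally legitimate here.
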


We also provide a posteriori error guarantees which may be of practical use.
\mnote{this doesn't have to be \( \Psi_i \)}
\begin{restatable}{theorem}{apostdistances}
\label{thm:apost_distances}
    Let \( \{ [d_i]_j \}_{j=1}^{k} \) and \( \{ [\theta_i]_j \}_{j=1}^{k} \) be the squares of the first component of eigenvectors and the eigenvalues respectively of \( [\vec{T}_i]_{:k,:k} \) from \cref{alg:SLQ}. Then
\begin{align*}
    \KS( \samp{\Psi_i}, \samp{\gq{k}{\Psi_i}})
    &\leq \left\langle \max_{j=1,\ldots, k} [d_i]_j \right\rangle
    ,\\
    \W( \samp{\Psi_i}, \samp{\gq{k}{\Psi_i}})
    \\&\hspace{-5em} \leq\left\langle { \sum_{j=0}^{n} \max\{ [d_i]_j, [d_i]_{j+1} \} ([\theta_i]_{j+1} - [\theta_i]_j) } \right\rangle
\end{align*}
where, and for notational convenience, we have defined \( [\theta_i]_0 = a\), \( [\theta_i]_{n+1} = b \), and \( [d_i]_0 = [d_i]_{n+1} = 0 \) for some choice of \( a,b \) such that \( a \leq \lambda_{\textup{min}}[\vec{A}] \) and \( b \geq \lambda_{\textup{max}}[\vec{A}] \).
\end{restatable}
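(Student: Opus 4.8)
The plan is to reduce everything to a single sample and then control the error incurred by replacing $\Psi_i$ with its Gaussian quadrature approximation $\gq{k}{\Psi_i}$ by bracketing the step function $\bOne[\,\cdot\le x]$ between two polynomials of degree at most $2k-1$. First, since $\samp{\Psi_i}(x) - \samp{\gq{k}{\Psi_i}}(x) = \samp{\Psi_i(x) - \gq{k}{\Psi_i}(x)}$, the triangle inequality gives $|\samp{\Psi_i}(x) - \samp{\gq{k}{\Psi_i}}(x)| \le \samp{|\Psi_i(x) - \gq{k}{\Psi_i}(x)|}$ pointwise in $x$. Taking a supremum over $x$ and using subadditivity of $\sup$ yields $\KS(\samp{\Psi_i},\samp{\gq{k}{\Psi_i}}) \le \samp{\KS(\Psi_i,\gq{k}{\Psi_i})}$, and integrating over $x$ yields $\W(\samp{\Psi_i},\samp{\gq{k}{\Psi_i}}) \le \samp{\W(\Psi_i,\gq{k}{\Psi_i})}$. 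Both bounds have exactly the claimed outer form, so it suffices to prove, for a single (fixed, and henceforth suppressed) index $i$, that $\KS(\Psi,\gq{k}{\Psi}) \le \max_j d_j$ and $\W(\Psi,\gq{k}{\Psi}) \le \sum_j \max\{d_j,d_{j+1}\}(\theta_{j+1}-\theta_j)$.

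Next recall the relevant structure. The distribution $\gq{k}{\Psi} = \Psi[[\vec{T}]_{:k,:k},\ehat]$ is, as a cumulative distribution function, the atomic measure $\sum_{j} d_j \bOne[\theta_j \le x]$ with total mass one, and being the $k$-node Gauss rule for $\Psi$ it matches all moments of $\Psi$ up to degree $2k-1$. Also, the nodes $\theta_j$ are the eigenvalues of $[\vec{Q}]_{:,:k}^\T \vec{A}[\vec{Q}]_{:,:k}$ and hence lie in $[\lambda_{\textup{min}}[\vec{A}],\lambda_{\textup{max}}[\vec{A}]] \subseteq [a,b]$, the interval on which $\Psi$ is supported. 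Now fix $x$ and let $j$ be the index with $\theta_j \le x < \theta_{j+1}$, using the conventions $\theta_0 = a$, $\theta_{k+1} = b$, $d_0 = d_{k+1} = 0$, so that $\gq{k}{\Psi}(x) = \sum_{l=1}^{j} d_l$. The key claim is that there exist polynomials $p_-,p_+$ of degree at most $2k-1$ with $p_-(s) \le \bOne[s\le x] \le p_+(s)$ for all $s \in [\lambda_{\textup{min}}[\vec{A}],\lambda_{\textup{max}}[\vec{A}]]$, interpolating the "correct'' values $p_-(\theta_l) = p_+(\theta_l) = \bOne[\theta_l \le x]$ at every node $\theta_l$ except, respectively, $l=j$ and $l=j+1$, where only $0 \le p_-(\theta_j) \le 1$ and $0\le p_+(\theta_{j+1})\le 1$ are required. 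Granting this, moment matching to degree $2k-1$ gives
\begin{align*}
  \int p_{\pm}\,\d{\Psi} = \int p_{\pm}\,\d(\gq{k}{\Psi}) = \sum_{l} d_l\, p_{\pm}(\theta_l),
\end{align*}
and combining with $p_- \le \bOne[\,\cdot\le x] \le p_+$ on the support of $\Psi$ yields $\sum_{l=1}^{j} d_l - d_j \le \Psi(x) \le \sum_{l=1}^{j} d_l + d_{j+1}$, i.e. $|\Psi(x) - \gq{k}{\Psi}(x)| \le \max\{d_j,d_{j+1}\}$ whenever $\theta_j \le x < \theta_{j+1}$.

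From this pointwise estimate both bounds follow by bookkeeping. Taking the supremum over all $x$ and noting that the index $j$ of the enclosing node interval ranges so that $\max_j \max\{d_j,d_{j+1}\} = \max_j d_j$ gives the Kolmogorov--Smirnov bound. For the Wasserstein bound, $\Psi(x) = \gq{k}{\Psi}(x) = 0$ for $x < a$ and $\Psi(x) = \gq{k}{\Psi}(x) = 1$ for $x \ge b$, so $\int |\Psi(x) - \gq{k}{\Psi}(x)|\,\d{x}$ splits over the intervals $[\theta_j,\theta_{j+1})$, on each of which the integrand is at most $\max\{d_j,d_{j+1}\}$; summing gives $\sum_j \max\{d_j,d_{j+1}\}(\theta_{j+1}-\theta_j)$. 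Reinstating the sample-average notation completes the argument.

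The main obstacle is constructing the bracketing polynomials $p_{\pm}$ and verifying they genuinely dominate (resp. are dominated by) the step function on the \emph{entire} spectral interval. They are built as Hermite-type interpolants that attain the value $0$ or $1$ with a double root at each node away from $x$ (so the one-sided inequality is never violated there) and interpolate freely at the single adjacent node; the count of interpolation conditions keeps the degree within $2k-1$, and this is essentially the construction underlying the classical error formulas for Gauss and Gauss--Radau quadrature. Establishing the global sign of $p_{\pm} - \bOne[\,\cdot\le x]$ is where the special properties of Gauss nodes (real, simple, contained in the convex hull of the support, with strictly positive weights $d_l$) are used; the remaining steps are routine.
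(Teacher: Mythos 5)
Your proposal is correct, but it reaches the per-sample bound by a different key lemma than the paper. Your outer reduction is exactly the paper's: \( \KS \) and \( \W \) of the averaged functions are bounded by the averages of the per-sample distances (the paper's proof of \cref{thm:apost_distances} is literally ``\cref{thm:gq_apost} plus the triangle inequality''). Where you diverge is in proving the per-sample estimate \( |\Psi(x)-\gq{k}{\Psi}(x)|\le\max\{d_j,d_{j+1}\} \) on \( [\theta_j,\theta_{j+1}) \). The paper gets this from an interlacing statement (\cref{thm:gq_interlace}): it applies the Karlin--Shapley/Stieltjes sign-change theorem (\cref{thm:moments_CDF}) — any two distributions on \( [a,b] \) sharing moments to degree \( s \) differ by a function with at least \( s \) sign changes — and then counts that \( \mu-\gq{k}{\mu} \) has only \( 2k-1 \) admissible sign-change locations, forcing \( \gq{k}{\mu}(\theta_j^-)\le\mu(\theta_j)\le\gq{k}{\mu}(\theta_j) \). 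You instead run the classical Chebyshev--Markov--Stieltjes argument: bracket \( \bOne[\,\cdot\le x] \) by Hermite-type polynomials of degree at most \( 2k-1 \) with double nodes, invoke Gauss exactness and positivity of the weights. Both routes are valid and give the identical pointwise bound; your endpoint handling with \( d_0=d_{k+1}=0 \) and the observation that the nodes lie in \( [\lambda_{\textup{min}}[\vec{A}],\lambda_{\textup{max}}[\vec{A}]]\subseteq[a,b] \) are consistent with the theorem's conventions. The trade-off: the paper's sign-change lemma applies to arbitrary pairs of moment-matching distributions and is proved in its appendix by a short integration-by-parts argument, so the interlacing corollary comes almost for free; your route uses only exactness and weight positivity of the Gauss rule and bounds \( \Psi(x) \) directly at arbitrary \( x \) without a separate monotonicity step, but its one substantive step — the global one-sided domination \( p_-\le\bOne[\,\cdot\le x]\le p_+ \) — is asserted by appeal to the classical construction rather than verified; a self-contained write-up would need the Rolle/zero-counting argument for those Hermite interpolants (or an explicit citation of the Chebyshev--Markov--Stieltjes inequalities). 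A small simplification available to you: since \( \Psi \) is monotone and \( \gq{k}{\Psi} \) is constant between nodes, it suffices to bracket at the nodes \( \theta_j \) and \( \theta_{j+1} \) only, which is exactly the standard CMS statement and avoids constructing brackets at a general \( x \).
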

Note that the Wasserstein distance bounds requires knowledge of points \( a,b \) such that \( a \leq \lambda_{\text{min}}[\vec{A}] \) and \( b \geq \lambda_{\text{max}}[\vec{A}] \).
Such bounds can be computed rigorously, both a priori \cite{kuczyski_wozniakowski_92} or a posteriori \cite{parlett_simon_stringer_82}.
In practice, \( \lambda_{\text{min}}([\vec{T}_i]_{:k,:k}) \to \lambda_{\text{min}}[\vec{A}] \) and \( \lambda_{\text{max}}([\vec{T}_i]_{:k,:k}) \to \lambda_{\text{max}}[\vec{A}] \) rapidly, so the \( j=0 \) and \( j=k \) terms can be omitted with negligible effect.

As noted in \cref{thm:exact_lanczos}, the exact CESM can be computed with \( n \) matrix vector products.
However, we also have the following lower bound for a specific class of matrices.
\begin{restatable}{theorem}{wassersteinlb}
    \label{thm:wasserstein_lb}
    For any \( t \in (0,1) \), there exists a matrix \( \vec{A} \) of size \( \lceil (4t)^{-1} \rceil  \) such that if \cref{alg:SLQ} uses fewer than \(  (8t)^{-1} \) matrix vector products, then \cref{alg:SLQ} will output an estimate \( \samp{ \gq{k}{ \Psi_i } } \) satisfying, 
\begin{align*}
    \W( \Phi[\vec{A}], \samp{\gq{k}{\Psi_i} } ) > t I[\vec{A}].
\end{align*}
\end{restatable}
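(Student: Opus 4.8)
The plan is to exhibit one explicit hard matrix and reduce the whole statement to a combinatorial estimate on Wasserstein distance. Since the ratio $\W(\Phi[\vec{A}],\cdot)/I[\vec{A}]$ and the number of matrix--vector products used by \cref{alg:SLQ} are both unchanged under $\vec{A}\mapsto c\vec{A}$, I would, after rescaling, take $\vec{A} = \operatorname{diag}(0,1,2,\ldots,n-1)$ with $n := \lceil (4t)^{-1}\rceil$, so that $I[\vec{A}] = n-1$ and $\Phi[\vec{A}]$ is the distribution function of the uniform measure $\mu$ on the $n$ equispaced points $\{0,1,\ldots,n-1\}$. When $t \ge 1/8$ one has $n\le 2$ and the constraint ``fewer than $(8t)^{-1}$ matrix--vector products'' forces $0$ matrix--vector products, so the claim is vacuous there; hence one may assume $n \ge 3$, equivalently $t < 1/8$, in which case $n-1 < (4t)^{-1} \le n$ gives $t < 1/(4(n-1))$.

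The first substantive step is to note that, regardless of the random vectors drawn, the output of \cref{alg:SLQ} is the distribution function of a discrete measure with very few atoms. Indeed $\gq{k}{\Psi_i} = \Psi[[\vec{T}_i]_{:k,:k},\ehat]$ is supported on the at most $k$ eigenvalues of the $k\times k$ matrix $[\vec{T}_i]_{:k,:k}$, and producing $[\vec{T}_i]_{:k,:k}$ costs $k$ matrix--vector products with $\vec{A}$; summing over $i$, the mixture $\samp{\gq{k}{\Psi_i}}$ is supported on at most (total number of matrix--vector products) $< (8t)^{-1} =: m$ points. So the theorem follows from the deterministic claim: if $\nu$ is any probability measure with at most $m$ atoms and $m < n/2$, then $\W(\mu,\nu) > t\,(n-1)$.

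To prove the claim I would use the elementary transport bound $\W(\mu,\nu) \ge \int d(x,\operatorname{supp}\nu)\,\d{\mu(x)}$ (any coupling must move the mass at $x$ at least to the nearest atom of $\nu$), which here reads $\W(\mu,\nu) \ge \tfrac1n\sum_{j=0}^{n-1} d(j,Y)$ with $Y := \operatorname{supp}\nu$, $|Y|\le m$. Up to the factor $1/n$ this is the optimal $m$-median cost of $n$ equispaced points, which I would bound below in the standard way: assigning each of the $n$ integers to its nearest element of $Y$ partitions them into at most $m$ blocks of \emph{consecutive} integers (nearest-point cells on a line are intervals), and a block of $r$ consecutive integers contributes at least the minimum sum of absolute deviations of $r$ consecutive integers, namely $\lfloor r^2/4\rfloor \ge (r^2-1)/4$. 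Summing and using $\sum_\ell r_\ell^2 \ge (\sum_\ell r_\ell)^2/(\#\text{blocks}) \ge n^2/m$ gives $\W(\mu,\nu) \ge \tfrac1{4n}(n^2/m - m) = (n^2-m^2)/(4nm)$. Since $m < n/2 < n/\varphi$ (with $\varphi$ the golden ratio) we have $n^2 - m^2 \ge nm$, hence $\W(\mu,\nu)\ge 1/4$; and $t < 1/(4(n-1))$ gives $t\,(n-1) < 1/4 \le \W(\mu,\nu)$, which is exactly $\W(\Phi[\vec{A}],\samp{\gq{k}{\Psi_i}}) > t\,I[\vec{A}]$.

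None of these computations is genuinely hard; the two places that want care are (i) the bookkeeping linking the number of atoms in the SLQ output to the number of matrix--vector products, since it is this that makes the conclusion hold for \emph{every} execution of the algorithm rather than a convenient one, and (ii) choosing the eigenvalue configuration so that there is comfortable slack between the $m$-median lower bound $(n^2-m^2)/(4nm)$ and the target $t(n-1)$ — the threshold $(8t)^{-1}$ is not tight, and any $m$ up to roughly $n/\varphi$ would suffice. I would also remark, for context, that by \cref{thm:exact_lanczos} the matvec count here is optimal up to the constant.
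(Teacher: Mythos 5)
Your proposal is correct, and it reaches the conclusion by a genuinely different route than the paper. The paper introduces the \emph{continuous} uniform distribution \( \Upsilon \) on \( [0,1] \) as an intermediary: it first shows, via a quantile-space reparametrization (the functions \( \tilde\varphi \) and the class \( \mathcal{C} \) of piecewise slope-one functions) plus Cauchy--Schwarz, that any distribution function with \( K \) jumps is at Wasserstein distance at least \( (4K)^{-1} \) from \( \Upsilon \); it then chooses \( n=\lceil (4t)^{-1}\rceil \) equispaced eigenvalues in \( [0,1] \) so that \( \W(\Upsilon,\Phi[\vec{A}])=(4n)^{-1}<t \), and finishes with the triangle inequality, which is exactly where the factor of two in the threshold \( (8t)^{-1} \) comes from. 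You instead bound \( \W \) between the \emph{discrete} uniform spectral measure and any \( m \)-atom measure directly, interpreting it as an \( m \)-median cost: nearest-atom cells are intervals of consecutive integers, each block of length \( r \) costs at least \( \lfloor r^2/4\rfloor \), and Cauchy--Schwarz gives \( \W \geq (n^2-m^2)/(4nm) \), which comfortably exceeds \( t\,I[\vec{A}] \) once atoms (\(=\) matrix--vector products) satisfy \( m<(8t)^{-1}\le n/2 \). Both arguments share the essential ingredients (equispaced spectrum, atoms bounded by matvecs, Cauchy--Schwarz over gaps), but yours dispenses with the intermediary and the triangle inequality, making the slack in the constant explicit, while the paper's quantile argument works purely from its definition \( \W(\mu,\nu)=\int|\mu(x)-\nu(x)|\,\d{x} \) and additionally yields the closing remark about moment-matching distributions at distance \( \Omega(k^{-1}) \). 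The one point you should make explicit is that your ``elementary transport bound'' \( \W(\mu,\nu)\ge\int d(x,\operatorname{supp}\nu)\,\d{\mu(x)} \) uses the classical identification of the \( L^1 \) distance between distribution functions with the Kantorovich transport cost on \( \R \); this is standard but is not stated in the paper, so it should be cited or derived. Minor bookkeeping quibble: you write \( m:=(8t)^{-1} \) and then require \( m<n/2 \), whereas only \( m\le n/2 \) is guaranteed when \( (4t)^{-1} \) is an integer; since the actual atom count is strictly below \( (8t)^{-1} \), and since even \( m=n/2 \) gives \( (n^2-m^2)/(4nm)=3/8>1/4>t(n-1) \), this does not affect correctness.
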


While \cref{thm:apost_distances,thm:wasserstein_lb} involve random variables, the results hold surely and therefore with probability one.


\section{Analysis and proofs}

For notational convenience, we denote \( \Phi[\vec{A}] \) by \( \Phi \) in proofs.

\subsection{Weighted CESM}

We start with analyzing the weighted CESM.
Note that this analysis is applicable to many algorithms for spectrum approximation, including the KPM.
\begin{lemma}
    \label{thm:psi_beta}
Suppose \( \vec{v} \sim \mathcal{U}(\mathcal{S}^{n-1}) \) and define \( m = n \Phi[\vec{A}](x) \).
Then,
\begin{align*}
    \Psi[\vec{A},\vec{v}](x) \sim \operatorname{Beta} \left( \frac{m}{2}, \frac{n-m}{2} \right).
\end{align*}
\end{lemma}

\begin{proof}
    Let \( \vec{U} = [\vec{u}_1, \ldots, \vec{u}_n] \), where \( \vec{u}_i \) is the \( i \)-th normalized eigenvector of \( \vec{A} \).
    Since \( \vec{U} \) is orthogonal, by the invariance of \( \mathcal{U}(\mathcal{S}^{n-1}) \) under orthogonal transforms, we have that \( \vec{U}^\T \vec{v} \sim \mathcal{U}(\mathcal{S}^{n-1}) \).
    
    We may therefore assume \( \vec{U}^\T\vec{v} \stackrel{\textup{d}}{=} \vec{x} / \| \vec{x} \| \), where \( \vec{x} \sim \mathcal{N}(\vec{0},\vec{I}) \). 
    Recall that the \( i \)-th weight of \( \Psi[\vec{A},\vec{v}] \) is given by \( w_i = (\vec{v}^\T \vec{u}_i)^2 \).
    Thus, the \( w_i \) have joint distribution given by, 
\begin{align*}
    w_i \stackrel{\textup{d}}{=} \left(\frac{[\vec{x}]_i}{\|\vec{x}\|} \right)^2 
    = \frac{([\vec{x}]_i)^2}{([\vec{x}]_1)^2+\cdots+([\vec{x}]_n)^2},
\end{align*}
for \( i=1,\ldots, n \).

Recall \( m = n \Phi[\vec{A}](x) \).
Then, 
\begin{align*}
    \Psi[\vec{A},\vec{v}](x)
    = \sum_{j=1}^{m} w_j
    \stackrel{\textup{d}}{=} 
    \frac{([\vec{x}]_1)^2 + \cdots + ([\vec{x}]_m)^2}{([\vec{x}]_1)^2+\cdots+([\vec{x}]_n)^2}.
\end{align*}    
    It is well known that for independent chi-square random variables \( Y \sim \chi_{\alpha}^2 \) and \( Z \sim \chi_{\beta}^2 \) (see, for example, \citep[Section 25.2]{johnson_kotz_balakrishnan_94}),
\begin{align*}
    \frac{Y}{Y+Z}\sim \operatorname{Beta} \left( \frac{\alpha}{2}, \frac{\beta}{2} \right).
\end{align*}
    Thus, since \( ([\vec{x}]_1)^2 + \cdots + ([\vec{x}]_m)^2 \) and \( ([\vec{x}]_{m+1})^2 + \cdots + ([\vec{x}]_n)^2 \) are independent chi-square random variables with \( m \) and \( n-m \) degrees of freedom respectively, \( \Psi[\vec{A},\vec{v}](x) \) is a beta random variable with parameters \( m/2 \) and \( (n-m)/2 \).
\end{proof}

\begin{figure}[tb]
    \centering
    \includegraphics[width=\columnwidth]{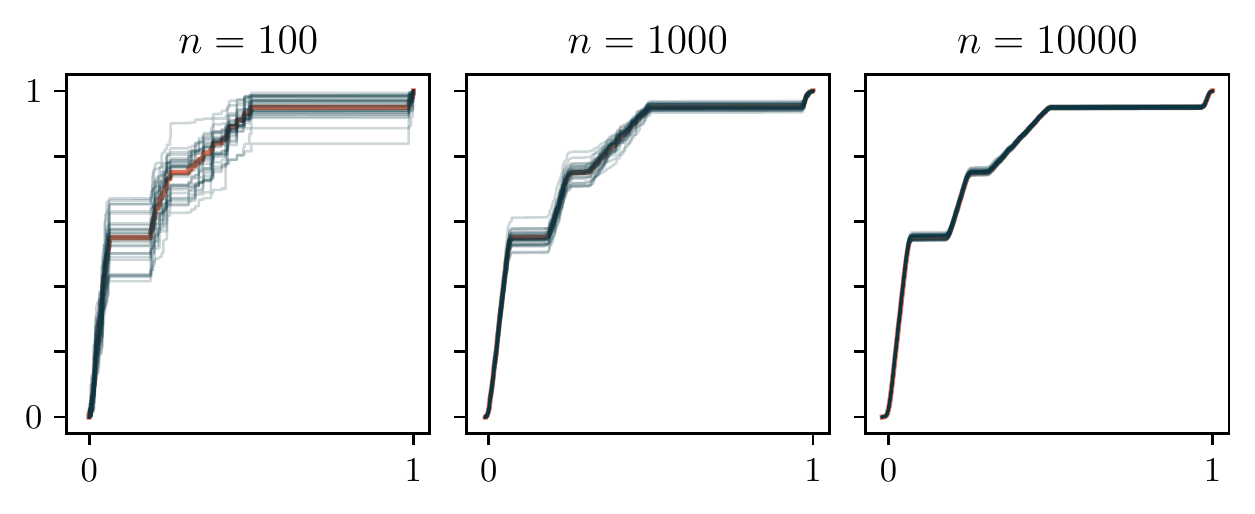}
    \vspace{-2em}
    \caption{
        Concentration of 30 independent samples of the weighted CESM \( \Psi[\vec{A},\vec{v}] \)
        ({\protect\raisebox{.0mm}{\protect\includegraphics[scale=.7]{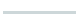}}})
        about the CESM \( \Phi[\vec{A}]\) ({\protect\raisebox{0mm}{\protect\includegraphics[scale=.7]{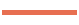}}}) for matrices of different sizes constructed with qualitatively similar spectrums.
        \emph{Remarks}:
        (i) the light lines are samples of a random variable with expectation given by the dark line, 
        (ii) samples of this random variable define cumulative probability densities, and
        (iii) this random variable concentrates exponentially about the CESM as \( n \) increases.
    }
    \label{fig:wCESM}
\end{figure}

As seen in \cref{fig:wCESM}, \( \Psi[\vec{A},\vec{v}](x) \) concentrates about its mean \( \Phi[\vec{A}](x) \) as \( n \) increases.
To understand this more precisely, we introduce the following definition and its consequences.

\begin{definition}
A random variable \( X \) is \( \sigma^2\)-sub-Gaussian if 
\begin{align*}
    \EE\big[ \exp(\lambda (X - \EE[X]))\big]
    \leq \exp \left( \frac{\lambda^2 \sigma^2}{2} \right)
    ,~\forall \lambda \in \mathbb{R}.
\end{align*}
\end{definition}

\begin{restatable}{lemma}{subgaussiansum}
\label{thm:subgaussian_sum}
Suppose \( X \) is \( \sigma^2 \)-sub-Gaussian.
Let \( X_1, \ldots, X_{n_{\textup{v}}} \) be iid samples of \( X \).
Then for all \( t \geq 0 \),
\begin{align*}
    \PP\big[ |\samp{X_i} - \EE[X] | > t \big]
    \leq 2 \exp \left( - \frac{n_{\textup{v}}}{2 \sigma^2} t^2 \right).
\end{align*}
\end{restatable}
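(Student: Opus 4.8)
The plan is to prove this by the standard Chernoff (exponential Markov) argument for sums of independent sub-Gaussian random variables, together with a union bound over the two tails. First I would observe that it suffices to bound the one-sided probability $\PP[\samp{X_i} - \EE[X] > t]$; the lower tail $\PP[\samp{X_i} - \EE[X] < -t]$ is handled identically after noting that $-X$ is also $\sigma^2$-sub-Gaussian (its defining inequality is obtained from that of $X$ by replacing $\lambda$ with $-\lambda$), and the two bounds combine via a union bound to produce the factor of $2$.

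For the one-sided bound, fix $\lambda > 0$ and apply Markov's inequality to the nonnegative random variable $\exp(\lambda(\samp{X_i} - \EE[X]))$:
\begin{align*}
    \PP\big[\samp{X_i} - \EE[X] > t\big]
    \leq e^{-\lambda t}\, \EE\Big[ \exp\big( \lambda (\samp{X_i} - \EE[X]) \big) \Big].
\end{align*}
Since $\samp{X_i} = \frac{1}{n_{\textup{v}}}\sum_{i=1}^{n_{\textup{v}}} X_i$ and the $X_i$ are independent with the same distribution as $X$, the expectation factorizes, and applying the $\sigma^2$-sub-Gaussian hypothesis to each factor with parameter $\lambda/n_{\textup{v}}$ gives
\begin{align*}
    \EE\Big[ \exp\big( \lambda (\samp{X_i} - \EE[X]) \big) \Big]
    = \prod_{i=1}^{n_{\textup{v}}} \EE\Big[ \exp\big( \tfrac{\lambda}{n_{\textup{v}}} (X_i - \EE[X]) \big) \Big]
    \leq \exp\left( \frac{\lambda^2 \sigma^2}{2 n_{\textup{v}}} \right).
\end{align*}
Combining the two displays yields $\PP[\samp{X_i} - \EE[X] > t] \leq \exp(-\lambda t + \lambda^2 \sigma^2 / (2 n_{\textup{v}}))$ for every $\lambda > 0$.

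Finally I would optimize the exponent over $\lambda$: the quadratic $-\lambda t + \lambda^2 \sigma^2/(2 n_{\textup{v}})$ is minimized at $\lambda^\ast = t n_{\textup{v}} / \sigma^2 > 0$, where it takes the value $-n_{\textup{v}} t^2 / (2\sigma^2)$. This gives $\PP[\samp{X_i} - \EE[X] > t] \leq \exp(-n_{\textup{v}} t^2 / (2\sigma^2))$, and the same bound for the lower tail, so the union bound completes the proof. There is no real obstacle here — the argument is entirely routine — the only points requiring a small amount of care are checking that $-X$ inherits the sub-Gaussian property (so the lower tail needs no separate hypothesis) and that the case $t = 0$ is covered trivially since the right-hand side is then $2 \geq 1$.
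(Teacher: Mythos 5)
Your proposal is correct and follows essentially the same route as the paper: a Chernoff-type argument applying Markov's inequality to the exponentiated sample mean, factorizing by independence, invoking the sub-Gaussian moment-generating-function bound, and optimizing over $\lambda$ (your choice $\lambda^\ast = t n_{\textup{v}}/\sigma^2$ is just a reparametrization of the paper's $\lambda = t/\sigma^2$ applied to $n_{\textup{v}}\samp{X_i}$). The only difference is that you spell out the two-sided bound via the sub-Gaussianity of $-X$ and a union bound, which the paper leaves implicit.
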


\begin{ctheorem}[{\citep[Theorem 1]{marchal_arbel_17}}]
\label{thm:beta_tails}
Suppose \( X \sim \operatorname{Beta}(\alpha,\beta) \).
Then, \( \EE[X] = \alpha / (\alpha + \beta) \), and \( X \) is \( (4(\alpha +\beta + 1))^{-1} \)-sub-Gaussian.
If \( \alpha = \beta \), then there is no smaller \( \sigma^2 \) such that \( X \) is \( \sigma^2 \)-sub-Gaussian.
\end{ctheorem}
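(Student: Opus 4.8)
The plan is to dispatch the three assertions in turn. The mean formula \( \EE[X] = \alpha/(\alpha+\beta) \) is the standard Beta computation (a ratio of Beta functions) and needs no comment. For optimality when \( \alpha = \beta \), I would use the elementary fact that every \( \sigma^2 \)-sub-Gaussian variable obeys \( \VV[X] \le \sigma^2 \): expanding \( \EE[\exp(\lambda(X-\EE[X]))] \le \exp(\lambda^2\sigma^2/2) \) to second order at \( \lambda = 0 \) (legitimate since \( X \) is bounded) and comparing \( \lambda^2 \) coefficients gives this, and for \( \alpha = \beta \) one has \( \VV[X] = \alpha^2/((2\alpha)^2(2\alpha+1)) = (4(\alpha+\beta+1))^{-1} \), which already equals the claimed proxy — so nothing smaller can work.

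The substance is the sub-Gaussian bound. Write \( \gamma = \alpha+\beta \), \( m = \alpha/\gamma = \EE[X] \), and \( \psi(\lambda) := \log\EE[\exp(\lambda(X-m))] \); the claim is exactly \( \psi(\lambda) \le \lambda^2/(8(\gamma+1)) \) for all \( \lambda\in\R \). Since \( M(\lambda) := \EE[\exp(\lambda X)] > 0 \) is entire, \( \psi \) is smooth with \( \psi(0) = \psi'(0) = 0 \) and \( \psi''(\lambda) = V(\lambda) \), where \( V(\lambda) \) is the variance of \( X \) under the tilted density \( p_\lambda(x) \propto \exp(\lambda x)\,x^{\alpha-1}(1-x)^{\beta-1} \) on \( [0,1] \). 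From \( \psi(\lambda) = \int_0^\lambda(\lambda-s)V(s)\,\d{s} \) (Taylor with integral remainder) and \( V \ge 0 \), it suffices to establish the pointwise bound \( V(\lambda) \le (4(\gamma+1))^{-1} \) for all \( \lambda \) and then integrate.

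To control \( V \), I would use that \( M \) is Kummer's confluent hypergeometric function \( {}_1F_1(\alpha;\gamma;\lambda) \), which solves \( \lambda M'' + (\gamma-\lambda)M' - \alpha M = 0 \). Setting \( c(\lambda) := (\log M)'(\lambda) \) (the tilted mean, with \( V = c' \)) and substituting \( M' = cM \), \( M'' = (c'+c^2)M \), this becomes the first-order relation \( \lambda V = \lambda c(1-c) - \gamma(c-m) \); differentiating once more gives \( \lambda V' = c(1-c) - (\gamma+1)V + \lambda(1-2c)V \). Since \( V = c' > 0 \), the map \( c:\R\to(0,1) \) is increasing with \( c(0) = m \); hence \( V(\lambda)\to 0 \) as \( \lambda\to\pm\infty \), \( V(0) = \VV[X] = m(1-m)/(\gamma+1) \le (4(\gamma+1))^{-1} \), and at any interior maximiser \( \lambda_\star\neq 0 \) of \( V \) the condition \( V'(\lambda_\star) = 0 \) yields (all quantities at \( \lambda_\star \)) \( (\gamma+1)V = c(1-c) + \lambda_\star(1-2c)V \). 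Eliminating \( \lambda_\star \) between this and the first-order relation and invoking \( c(1-c) = \tfrac14 - (c-\tfrac12)^2 \le \tfrac14 \) together with \( \operatorname{sign}(\lambda_\star) = \operatorname{sign}(c(\lambda_\star)-m) \) reduces \( (\gamma+1)V(\lambda_\star) \le \tfrac14 \) to an elementary inequality in the two scalars \( c(\lambda_\star)\in(0,1) \) and \( \lambda_\star V(\lambda_\star) \).

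I expect the main obstacle to be closing exactly that last inequality. When \( \lambda_\star \) and \( 1-2c(\lambda_\star) \) have the same sign, the cross-term \( \lambda_\star(1-2c(\lambda_\star))V(\lambda_\star) \) is positive, so \( c(1-c)\le\tfrac14 \) no longer suffices on its own and one has to use the two ODE relations in tandem — essentially the heart of Marchal and Arbel's argument. Everything else (the reduction to the cumulant-generating-function bound, the Kummer identity, the behaviour as \( \lambda\to 0 \) and \( \lambda\to\pm\infty \), and the optimality statement) is routine.
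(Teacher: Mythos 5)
The parts you treat as routine are indeed fine: the mean, the reduction of the optimality claim to \( \VV[X] \le \sigma^2 \) via the second-order expansion of the moment generating inequality, and the identity \( \VV[\operatorname{Beta}(\alpha,\alpha)] = (4(2\alpha+1))^{-1} \) are all correct, and your Kummer-equation algebra for \( c = (\log M)' \) and \( V = (\log M)'' \) checks out. Note also that the paper itself offers no proof of this statement --- it is quoted from Marchal and Arbel --- so the only question is whether your argument stands on its own.

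It does not, and the failure is exactly at the step you flag but hope is merely technical. Your plan reduces sub-Gaussianity to the pointwise bound \( V(\lambda) \le (4(\gamma+1))^{-1} \) for all \( \lambda \), i.e.\ a uniform bound on the curvature of the cumulant generating function; but that bound is false for asymmetric parameters. Take \( \beta = 1 \) and \( \alpha \) small: the tilted density is proportional to \( e^{\lambda x} x^{\alpha-1} \), and choosing \( \lambda \approx \ln(1/\alpha) \) (so that \( e^{\lambda}/\lambda \approx 1/\alpha \)) splits the tilted mass essentially in half between an \( O(1/\lambda) \)-neighborhood of \( 0 \) (coming from the \( x^{\alpha-1} \) singularity) and an \( O(1/\lambda) \)-neighborhood of \( 1 \) (coming from the exponential tilt), so \( V(\lambda) \) approaches \( 1/4 \) as \( \alpha \to 0 \), while \( (4(\gamma+1))^{-1} \approx 1/8 \). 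Sub-Gaussianity with proxy \( \sigma^2 \) only requires \( \psi(\lambda) \le \sigma^2 \lambda^2/2 \), which is strictly weaker than \( \psi'' \le \sigma^2 \) everywhere; in general \( \VV[X] \le \sigma^2_{\mathrm{opt}} \le \sup_\lambda \psi''(\lambda) \), and in this example the second inequality is strict. Consequently the inequality \( (\gamma+1)V(\lambda_\star) \le 1/4 \) that you want at the interior maximiser of \( V \) --- precisely in the regime where \( \lambda_\star \) and \( 1-2c(\lambda_\star) \) share a sign --- is simply not true for all \( (\alpha,\beta) \), and no manipulation of your two (correct) ODE relations can yield it, since they hold at the actual maximiser where the conclusion fails. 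A correct proof must control \( \psi \) itself rather than its second derivative pointwise; Marchal and Arbel work directly with \( \log {}_1F_1(\alpha;\gamma;\lambda) \) against the quadratic, characterising the optimal proxy through a tangency condition, which is an essentially different mechanism from the uniform-curvature argument you propose.
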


With these results in place, the proof of \cref{thm:CESM_estimator_prob} is straightforward.
\begin{proof}[Proof of \cref{thm:CESM_estimator_prob}]
First note that the maximums exist because \( \Phi\) and \( \samp{\Psi_i} \) are right continuous and piecewise constant except at \( \{ \lambda_i[\vec{A}] \}_{i=1}^{n} \).

For any \( x \), let \( m  = m(x) = n \Phi(x) \).
Using \cref{thm:psi_beta,thm:beta_tails,thm:subgaussian_sum} we have that for any \( x \),
\begin{align*}
    &\PP\big[ \left| \Phi(x) - \samp{ \Psi_i(x) } \right| > t \:\big]
    \\&\hspace{3em}\leq 2 \exp \left( - \frac{n_{\textup{v}}}{2 (4(\frac{m}{2}+\frac{n-m}{2}+1) )^{-1}} t^2 \right).
\end{align*}

We also have
\begin{align*}
    &\sup_{x\in\R} |\Phi(x) - \samp{\Psi_i(x)}|
    \\& \hspace{4em}= \max_{i=1,\ldots,n-1} | \Phi(\lambda_i[\vec{A}]) - \samp{\Psi_i(\lambda_i[\vec{A}])} |.
\end{align*}
The second result follows by applying a union bound to the events that the maximum is attained at \( \lambda_i[\vec{A}] \) for each \( i=1,\ldots, n \).
\end{proof}

\subsection{Gaussian quadrature}

We now shift our attention to the approximation of the weighted CESM by a Gaussian quadrature rule.
\begin{definition}
Let \( \mu \) be a distribution function with finite moments up to degree \( 2k-1 \).
The \( k \)-point Gaussian quadrature rule for \( \mu \), is the distribution
\begin{align*}
    \nu(x) = \sum_{j=1}^{k} d_j \bOne[\theta_j \leq x]
\end{align*}
corresponding to nodes \( \{ \theta_j \}_{j=1}^{k} \) and weights \( \{ d_j \}_{j=1}^{k} \) such that the moments of \( \mu \) and \( \nu \) are equal up to degree \( 2k-1 \). 
We denote such a distribution function by \( \gq{k}{\mu} \).
\end{definition}

This definition implies the total mass of a Gaussian quadrature rule must agree with the original distribution, which in the context of computing approximations to the weighted CESM, means that the SLQ approximation remains a probability distribution function.
This property is not retained by other approaches to approximating the weighted CESM such as the KPM.
More generally, \cref{thm:wasserstein_moments} asserts that the Wasserstein distance decays inversely with the number of matching moments.

\mnote{\cite{kong_valiant_17} Proposition 1}
Since, \cref{thm:wasserstein_moments} holds uniformly for all probability distribution functions constant on the complement of \( [a,b] \), we also recall an a posteriori characterization of the closeness of distribution functions with matching moments due to \cite{karlin_shapley_72} but known implicitly far earlier \cite{stieltjes_18}.
Before stating this theorem, we introduce a definition and a resulting lemma.

\begin{definition}
A function \( \gamma \) has a sign change at \( x \) if there exists \( x' < x \) such that \( \gamma(x')\neq 0 \) and \( x = \inf\{t > x' : \gamma(t) \gamma(x') < 0 \} \).
\end{definition}

\begin{restatable}{lemma}{weaksign}
\label{thm:weak_sign}
Suppose \( \gamma \) is weakly increasing on an interval \( (a,b) \).
Then \( \gamma \) and has a sign change at \( x \) if and only if there exists \( x'<x \) such that \( \gamma(x') < 0 \), \( \gamma(y) \leq 0 \) for all \( y\in(a,x) \) and \( \gamma(y) > 0 \) for all \( y\in(x,b) \).
\end{restatable}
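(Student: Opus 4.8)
The statement to prove is \Cref{thm:weak_sign}, which characterizes sign changes of a weakly increasing function.

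\medskip

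\textbf{Proof plan.} The plan is to prove both directions of the equivalence directly from the definitions, leaning heavily on monotonicity. For the reverse direction ("if"), suppose there exists $x' < x$ with $\gamma(x') < 0$, with $\gamma(y) \leq 0$ for all $y \in (a,x)$ and $\gamma(y) > 0$ for all $y \in (x,b)$. Then $\gamma(x') \neq 0$, and for any $t > x'$ with $\gamma(t)\gamma(x') < 0$ we need $\gamma(t) > 0$, which by hypothesis forces $t > x$; conversely every $t > x$ (in $(a,b)$) has $\gamma(t) > 0$, hence $\gamma(t)\gamma(x') < 0$. So the infimum of $\{t > x' : \gamma(t)\gamma(x') < 0\}$ is exactly $x$, giving a sign change at $x$. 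This direction is essentially bookkeeping with the definition.

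\medskip

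\textbf{The forward direction.} For the harder "only if" direction, assume $\gamma$ has a sign change at $x$: there is $x' < x$ with $\gamma(x') \neq 0$ and $x = \inf\{t > x' : \gamma(t)\gamma(x') < 0\}$. First I would argue $\gamma(x') < 0$: since $\gamma$ is weakly increasing and the set $\{t > x' : \gamma(t)\gamma(x') < 0\}$ is nonempty (otherwise the infimum would be $+\infty$, or at least not a genuine sign change — one should be careful that the infimum is a real number in $(a,b)$), there is some $t_0 > x'$ with $\gamma(t_0)$ of opposite sign to $\gamma(x')$. If $\gamma(x') > 0$, then $\gamma(t_0) < 0$ with $t_0 > x'$, contradicting that $\gamma$ is weakly increasing. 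Hence $\gamma(x') < 0$. Next, for $y \in (a,x)$: if $y \leq x'$ then $\gamma(y) \leq \gamma(x') < 0$ by monotonicity; if $x' < y < x$, then $y$ is below the infimum, so $\gamma(y)\gamma(x') \geq 0$, i.e. $\gamma(y) \leq 0$ (using $\gamma(x') < 0$). Either way $\gamma(y) \leq 0$. Finally, for $y \in (x,b)$: I want $\gamma(y) > 0$. Pick any $t$ with $x < t < y$ (and $t$ close enough to $x$ to be useful); since $t > x = \inf\{\cdots\}$, there exists $s$ with $x < s \leq t$ such that $\gamma(s)\gamma(x') < 0$, hence $\gamma(s) > 0$; then $\gamma(y) \geq \gamma(s) > 0$ by monotonicity since $y > t \geq s$.

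\medskip

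\textbf{Anticipated obstacle.} The main subtlety is not any deep idea but the careful handling of the infimum and of strict-versus-weak inequalities at the boundary point $x$ itself, and ensuring the definition of "sign change" is interpreted so that the witnessing set is nonempty and its infimum lies in $(a,b)$; one must also be mindful that $\gamma$ need not be continuous, so $\gamma(x)$ can be anything. I expect to spend most of the care on the step showing $\gamma(y) > 0$ for $y \in (x,b)$, extracting a point $s$ just above $x$ where $\gamma$ is strictly positive from the infimum characterization, and then propagating positivity rightward by monotonicity. All steps are elementary; the value of the lemma is that it repackages "sign change" into a clean monotone form that will presumably feed into the a posteriori Wasserstein/KS bounds via Gaussian quadrature node interlacing.
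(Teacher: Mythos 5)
Your proof is correct and follows essentially the same route as the paper's: both directions are handled directly from the definition of a sign change, with weak monotonicity forcing \( \gamma(x') < 0 \) and \( \gamma(t) > 0 \) for any \( t \) in the witnessing set, and the infimum characterization then yielding \( \gamma \leq 0 \) below \( x \) and \( \gamma > 0 \) above. Your write-up is in fact a bit more careful than the paper's (handling \( y \leq x' \) separately, noting nonemptiness of the set, and extracting the point \( s \) explicitly), but it is the same argument.
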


\begin{restatable}[{\citep[Theorem 22.1]{karlin_shapley_72}}]{ctheorem}{momentsCDFinterlace}
\label{thm:moments_CDF}
Suppose \( \mu \) and \( \nu \) are two probability distribution functions constant on the complement of \( [a,b] \) whose moments are equal up to degree \( s \).
Define \( \gamma : [a,b] \to [0,1] \) by \( \gamma(x) = \mu(x) - \nu(x) \).
Then \( \gamma \) is identically zero or changes sign at least \( s \) times.
\end{restatable}

Note that for a probability distribution function, \( \gq{k}{\mu} \) is piecewise constant with \( k \) points of discontinuity.
Using the fact that \( \gq{k}{\mu} \) and \( \mu \) share moments up to degree \( 2k-1 \) along with \cref{thm:moments_CDF}, we immediatley obtain the following bounds on \( \gq{k}{\mu} \) (proved in \cref{sec:proofs} for completeness).

\begin{restatable}{corollary}{gqinterlace}
\label{thm:gq_interlace}
Suppose \( \mu \) is a probability distribution function constant on the complement of \( [a,b] \). 
Let \( \{ \theta_j \}_{j=1}^{k} \) and \( \{ d_j \}_{j=1}^{k} \) respectively be the nodes and weights of the Guassian quadrature rule \( \gq{k}{\mu} \).
Define \( \gqlower{k}{\mu} \) and \( \gqupper{k}{\mu} \) by
\begin{align*}
    \gqlower{k}{\mu}(x)
    &:= \sum_{j=1}^{k-1} d_{j} \bOne\left[ \theta_{j+1} \leq x \right]
    ,
    \\
    \gqupper{k}{\mu}(x)
    &:= d_1 + \sum_{j=2}^{k} d_{j} \bOne\left[\theta_{j-1} \leq x \right].
\end{align*}
    Then, for all \( x \in [a,b] \),
\begin{align*}
    \gqlower{k}{\mu}(x)
    \leq \mu(x) 
    \leq \gqupper{k}{\mu}(x).
\end{align*}
\end{restatable}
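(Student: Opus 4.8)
The plan is to write $\nu := \gq{k}{\mu}$ with nodes $\theta_1 < \cdots < \theta_k$ and positive weights $d_1,\dots,d_k$ summing to $1$, to set $\gamma := \mu - \nu$, and to abbreviate $t_j := d_1 + \cdots + d_j$ (so $t_0 = 0$ and $t_k = 1$). First I would reduce the two-sided bound to pointwise statements at the nodes: evaluating the explicit step functions $\gqlower{k}{\mu}$ and $\gqupper{k}{\mu}$ on each piece of the partition of $[a,b]$ by the nodes and using that $\mu$ is weakly increasing, the claim $\gqlower{k}{\mu}(x) \le \mu(x) \le \gqupper{k}{\mu}(x)$ for all $x\in[a,b]$ follows once we establish
\begin{align*}
    \mu(\theta_\ell^{-}) \le t_\ell
    \qquad\text{and}\qquad
    \mu(\theta_\ell) \ge t_{\ell-1},
    \qquad \ell = 1,\dots,k,
\end{align*}
where $\mu(\theta_\ell^{-})$ is the left limit (two endpoint cases are trivial, amounting to $0 \le \mu$ and $\mu \le 1$).

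Next I would invoke \cref{thm:moments_CDF}: since $\mu$ and $\nu$ share moments up to degree $2k-1$, the function $\gamma$ is either identically zero on $[a,b]$ or changes sign there at least $2k-1$ times. If $\gamma\equiv 0$ then $\mu=\nu$ and the two displayed inequalities hold trivially because $d_\ell>0$, so assume otherwise. The structural facts I would record are: (i) $\nu$ is constant on each of the intervals into which the nodes partition $[a,b]$, so $\gamma$ is weakly increasing on each; (ii) $\gamma=\mu\ge 0$ on $[a,\theta_1)$, $\gamma=\mu-1\le 0$ on $[\theta_k,b]$, and $\gamma\equiv 0$ outside $[a,b]$; and (iii) the jump of $\gamma$ at a node satisfies $\gamma(\theta_\ell)-\gamma(\theta_\ell^{-}) = \bigl(\mu(\theta_\ell)-\mu(\theta_\ell^{-})\bigr) - d_\ell \ge -d_\ell$ since $\mu$ is weakly increasing, hence $\gamma(\theta_\ell)\ge\gamma(\theta_\ell^{-})-d_\ell$ and $\gamma(\theta_\ell^{-})\le\gamma(\theta_\ell)+d_\ell$.

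I would then run a counting argument. By (i) and \cref{thm:weak_sign}, $\gamma$ has at most one sign change in each of the $k-1$ open intervals $(\theta_1,\theta_2),\dots,(\theta_{k-1},\theta_k)$; by (ii), $\gamma$ has no sign change in $(a,\theta_1)$, in $(\theta_k,b)$, or at $a$ or $b$; and $\gamma$ may change sign at each of the $k$ nodes. Hence $\gamma$ has at most $(k-1)+k=2k-1$ sign changes on $[a,b]$. Together with the lower bound of $2k-1$ from \cref{thm:moments_CDF}, this count is exact, so $\gamma$ must change sign at \emph{every} node. To finish, suppose one of the displayed inequalities failed, say $\mu(\theta_\ell^{-}) > t_\ell$; then $\gamma(\theta_\ell^{-}) > d_\ell > 0$, and by (iii) $\gamma(\theta_\ell) \ge \gamma(\theta_\ell^{-}) - d_\ell > 0$, so $\gamma$ is strictly positive on a left neighborhood of $\theta_\ell$ and, by right continuity of $\gamma$, also on a right neighborhood of $\theta_\ell$; this rules out a sign change at $\theta_\ell$, a contradiction. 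The case $\mu(\theta_\ell) < t_{\ell-1}$ is symmetric, giving $\gamma$ strictly negative on both sides of $\theta_\ell$. Thus both families of inequalities hold, proving the corollary.

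The step I expect to be the main obstacle is the tightness of the ``at most $2k-1$ sign changes'' count: it is exactly this equality that forces a sign change at every node, and proving it cleanly requires both halves of (ii) (so the two extreme intervals really contribute nothing), a careful treatment of degenerate configurations — a node coinciding with $a$ or $b$, or $\gamma$ vanishing on a whole subinterval — and attention to how the pointwise definition of ``sign change'' interacts with the downward jumps of $\gamma$ at the nodes. The remaining steps are routine bookkeeping with the explicit forms of $\gqlower{k}{\mu}$, $\gqupper{k}{\mu}$ and the monotonicity of $\mu$.
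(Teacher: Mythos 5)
Your proposal is correct and follows essentially the same route as the paper's proof: define \( \gamma = \mu - \gq{k}{\mu} \), use \cref{thm:weak_sign} and the structure of \( \gamma \) (weakly increasing between nodes, nonnegative before \( \theta_1 \), nonpositive after \( \theta_k \)) to cap the possible sign-change locations at \( 2k-1 \), invoke \cref{thm:moments_CDF} to force a sign change at every node, and then sandwich \( \mu \) between nodes using its monotonicity. The only difference is presentational — you extract the node inequalities \( \mu(\theta_\ell^-) \le t_\ell \), \( \mu(\theta_\ell) \ge t_{\ell-1} \) by contradiction, while the paper reads \( \gq{k}{\mu}(\theta_j^-) \le \mu(\theta_j) \le \gq{k}{\mu}(\theta_j) \) directly from the sign change — which is immaterial.
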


In turn, \cref{thm:gq_interlace} implies bounds on the Wasserstein and Kolmogorov--Smirnov distances between \( \mu \) and \( \gq{k}{\mu} \).

\begin{restatable}{corollary}{gqapost}
\label{thm:gq_apost}
Suppose \( \mu \) is a probability distribution function constant on the complement of \( [a,b] \). 
Let \( \{ \theta_j \}_{j=1}^{k} \) and \( \{ d_j \}_{j=1}^{k} \) respectively be the nodes and weights of the Guassian quadrature rule \( \gq{k}{\mu} \).
Then
\begin{align*}
    \KS(\mu,\gq{k}{\mu}) 
    &\leq \max_{j=1,\ldots,k} d_j 
    \\
    \W(\mu,\gq{k}{\mu})
    &\leq \sum_{j=0}^{k} \max\{d_j,d_{j+1}\} ( \theta_{j+1} - \theta_j )
\end{align*}
where we define \( \theta_0 = a \), \( \theta_{k+1} = b \), and \( d_0 = d_{k+1} = 0 \).
\end{restatable}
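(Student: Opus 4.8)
The plan is to reduce everything to the pointwise sandwich $\gqlower{k}{\mu}(x) \le \mu(x) \le \gqupper{k}{\mu}(x)$ supplied by \cref{thm:gq_interlace}, after observing that the two shifted staircases $\gqlower{k}{\mu}$ and $\gqupper{k}{\mu}$ differ from $\gq{k}{\mu}$ by exactly one weight on each ``cell'' of the partition $a = \theta_0 \le \theta_1 \le \cdots \le \theta_k \le \theta_{k+1} = b$.

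First I would dispose of the exterior: the nodes $\theta_1,\ldots,\theta_k$ of the Gaussian quadrature rule for $\mu$ lie in the convex hull of the support of $\mu$, hence in $[a,b]$, so $\gq{k}{\mu}(x) = 0 = \mu(x)$ for $x < a$ and $\gq{k}{\mu}(x) = 1 = \mu(x)$ for $x \ge b$. Thus $\mu - \gq{k}{\mu}$ is supported on $[a,b]$, and both $\KS(\mu,\gq{k}{\mu})$ and $\W(\mu,\gq{k}{\mu})$ are determined by the restriction to $[a,b]$.

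Next, on each open cell $(\theta_j,\theta_{j+1})$, $j = 0,\ldots,k$, I would read the three staircases off their definitions by counting which indicators are active: $\gq{k}{\mu}(x) = \sum_{l=1}^{j} d_l$, $\gqlower{k}{\mu}(x) = \sum_{l=1}^{j-1} d_l$, and $\gqupper{k}{\mu}(x) = \sum_{l=1}^{j+1} d_l$, where the conventions $d_0 = d_{k+1} = 0$ (and $\theta_0 = a$, $\theta_{k+1} = b$) are exactly what make the degenerate cells $(a,\theta_1)$ and $(\theta_k,b)$ obey the same formula. Hence $\gq{k}{\mu}(x) - \gqlower{k}{\mu}(x) = d_j$ and $\gqupper{k}{\mu}(x) - \gq{k}{\mu}(x) = d_{j+1}$, so \cref{thm:gq_interlace} gives $-d_j \le \mu(x) - \gq{k}{\mu}(x) \le d_{j+1}$, i.e. $|\mu(x) - \gq{k}{\mu}(x)| \le \max\{d_j,d_{j+1}\}$ throughout that cell; by right-continuity of $\mu$ and $\gq{k}{\mu}$ this passes harmlessly to the endpoints for the purpose of taking a supremum or an integral.

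Finally I would assemble the two bounds. Taking the supremum over all cells yields $\KS(\mu,\gq{k}{\mu}) \le \max_{0\le j\le k}\max\{d_j,d_{j+1}\} = \max_{1\le j\le k} d_j$, since $d_0 = d_{k+1} = 0$. Integrating over $[a,b]$ cell by cell yields $\W(\mu,\gq{k}{\mu}) = \sum_{j=0}^{k}\int_{\theta_j}^{\theta_{j+1}} |\mu(x) - \gq{k}{\mu}(x)|\,\d{x} \le \sum_{j=0}^{k}\max\{d_j,d_{j+1}\}(\theta_{j+1}-\theta_j)$, as claimed. The argument is essentially bookkeeping; the only point needing care is aligning the index shifts in the definitions of $\gqlower{k}{\mu}$ and $\gqupper{k}{\mu}$ with the cell index $j$ and checking the two boundary cells, so I expect the ``obstacle'' here to be purely notational.
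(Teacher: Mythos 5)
Your proposal is correct and follows essentially the same route as the paper: it invokes the sandwich $\gqlower{k}{\mu} \le \mu \le \gqupper{k}{\mu}$ from \cref{thm:gq_interlace}, observes that on each cell $[\theta_j,\theta_{j+1})$ these shifted staircases differ from $\gq{k}{\mu}$ by exactly $d_j$ and $d_{j+1}$, so $|\mu - \gq{k}{\mu}| \le \max\{d_j,d_{j+1}\}$ there, and then takes the supremum for the Kolmogorov--Smirnov bound and integrates cell by cell for the Wasserstein bound. Your explicit handling of the region outside $[a,b]$ and of the index bookkeeping is a slightly more detailed rendering of the paper's compressed argument, not a different method.
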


\begin{figure*}[ht]
    \includegraphics[width=\textwidth]{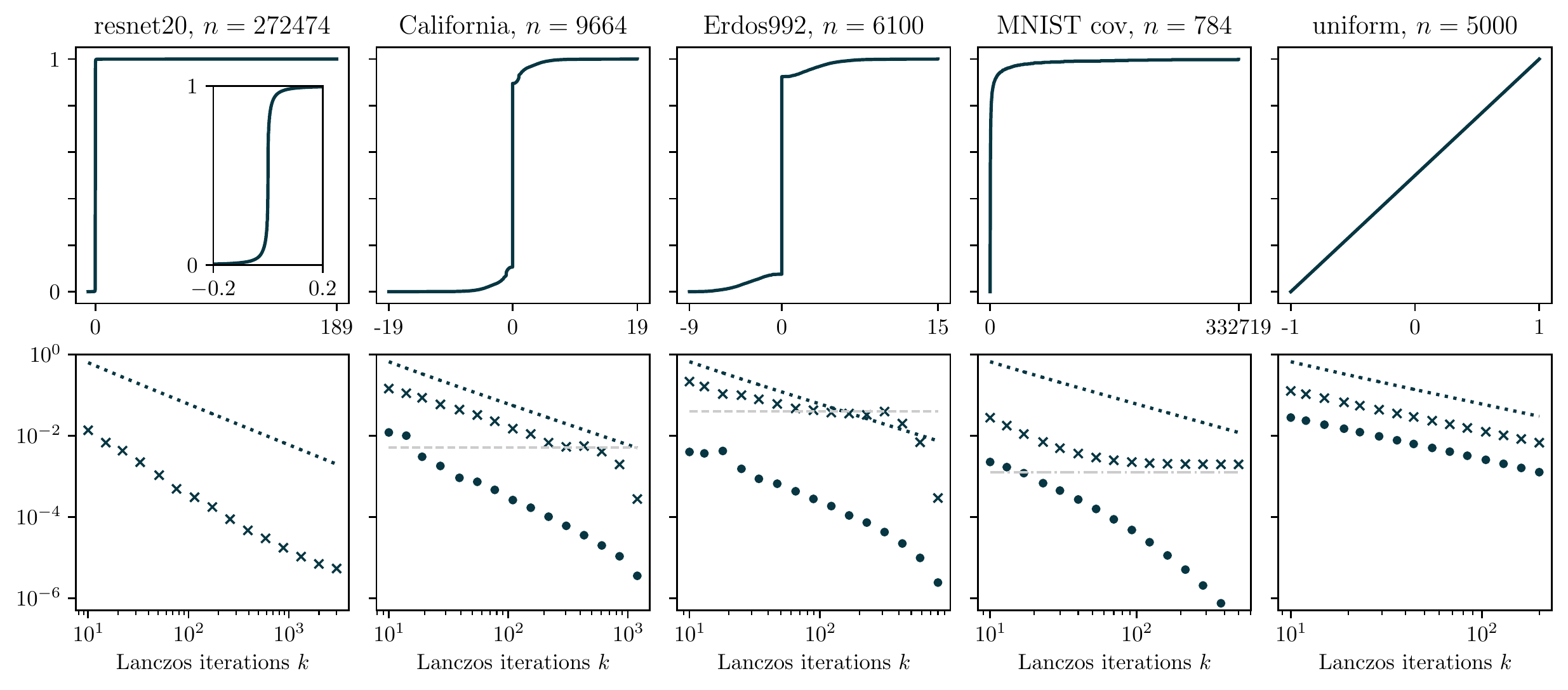}
    \vspace{-2em}
    \caption{
        \emph{Top}: distribution functions
        \emph{Bottom}: Wasserstein error and error bounds. 
        All problems are scaled so that \( I[\vec{A}] = 1 \) for easier comparison.
        From left to right, \( n_{\textup{v}} = 2,6,9,68,11 \) chosen so that \( n_{\text{v}} \) is roughly of size \( O(n^{-1}) \).
        \emph{Legend}: 
        \( \W(\Phi[\vec{A}],\samp{\gq{k}{\Psi_i}} \) 
        ({\protect\raisebox{0mm}{\protect\includegraphics[scale=.7]{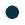}}}),
        bound \( \samp{ \sum_{j=0}^{n} \max\{ [d_i]_j, [d_i]_{j+1} \} ([\theta_i]_{j+1} - \theta_i]_j) } \)
        ({\protect\raisebox{0mm}{\protect\includegraphics[scale=.7]{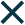}}}),
        bound \( 12 I[\vec{A}] (2k-1)^{-1} \)
        ({\protect\raisebox{0mm}{\protect\includegraphics[scale=.7]{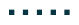}}}),
        \( ( \Phi(d^-) - \Phi(c) ) \: |d-c| \) described in \labelcref{eqn:stagnate}
        ({\protect\raisebox{0mm}{\protect\includegraphics[scale=.7]{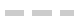}}}),
        \( I[\vec{A}] n^{-1} \)
        ({\protect\raisebox{0mm}{\protect\includegraphics[scale=.7]{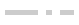}}}).
    }
   \label{fig:wasserstein_bounds}
\end{figure*}

Finally, we note the classical result that Lanczos algorithm computes a Gaussian quadrature rule for \( \Psi[\vec{A},\vec{v}] \) \cite{gautschi_04,golub_meurant_09}.
\begin{proposition}
\label{thm:lanczos_GQ}
Let \( [\vec{T}]_{:k,:k} \) be the output of \cref{alg:lanczos} run on \( \vec{A} \), \( \vec{v} \) for \( k \) steps.
Then the eigenvalues of \( [\vec{T}]_{:k,:k} \) and the square of the first components of the eigenvectors of \( [\vec{T}]_{:k,:k} \) form a degree \( k \) Gaussian quadrature rule for \( \mu \).
    That is, \( \gq{k}{\Psi[\vec{A},\vec{v}]} = \Psi[[\vec{T}]_{:k,:k},\ehat] \).
\end{proposition}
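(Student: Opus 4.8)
The plan is to check that $\Psi[[\vec{T}]_{:k,:k},\ehat]$ satisfies the two defining properties of $\gq{k}{\Psi[\vec{A},\vec{v}]}$: it is a discrete distribution function with exactly $k$ points of increase, and its moments agree with those of $\Psi[\vec{A},\vec{v}]$ up to degree $2k-1$. Without loss of generality take $\vec{v}$ to be a unit vector (as in \cref{alg:SLQ}; otherwise rescale), so $\vec{q}_1=\vec{v}$ and $\Psi[\vec{A},\vec{v}]=\sum_i (\vec{v}^\T\vec{u}_i)^2\bOne[\lambda_i[\vec{A}]\leq x]$ is a probability distribution function whose $j$-th moment is $\int x^j\,\d{\Psi[\vec{A},\vec{v}](x)} = \vec{v}^\T\vec{A}^j\vec{v}$. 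Likewise, writing $[\vec{T}]_{:k,:k}=\vec{W}\vec{\Theta}\vec{W}^\T$ for an eigendecomposition, $\Psi[[\vec{T}]_{:k,:k},\ehat]$ has nodes equal to the eigenvalues $\{\theta_\ell\}$, weights equal to the squared first eigenvector components $\{[\vec{W}]_{1,\ell}^2\}$, total mass $\|\ehat\|^2=1$, and $j$-th moment $\ehat^\T[\vec{T}]_{:k,:k}^j\ehat$.

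First I would record the structure of $[\vec{T}]_{:k,:k}$. Assuming Lanczos does not break down before step $k$ (equivalently, $\vec{v}$ has nonzero component on at least $k$ eigenvectors of $\vec{A}$, so that $\Psi[\vec{A},\vec{v}]$ has at least $k$ points of increase), the off-diagonals $\beta_1,\ldots,\beta_{k-1}$ are all nonzero. A standard argument — if $[\vec{T}]_{:k,:k}\vec{w}=\theta\vec{w}$ with $[\vec{w}]_1=0$ then the recurrence $\beta_i[\vec{w}]_{i+1} = (\theta-\alpha_i)[\vec{w}]_i - \beta_{i-1}[\vec{w}]_{i-1}$ forces $\vec{w}=\vec{0}$ — shows every eigenvector of an irreducible symmetric tridiagonal matrix has nonzero first component, and hence also that its eigenvalues are simple. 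Therefore $\Psi[[\vec{T}]_{:k,:k},\ehat]$ has exactly $k$ points of increase, i.e.\ it has the form $\sum_{\ell=1}^{k}d_\ell\bOne[\theta_\ell\leq x]$ with distinct $\theta_\ell$ and positive $d_\ell$.

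Next I would establish the moment identity $\vec{q}_1^\T\vec{A}^j\vec{q}_1 = \ehat^\T[\vec{T}]_{:k,:k}^j\ehat$ for $0\leq j\leq 2k-1$. Starting from the Lanczos factorization $\vec{A}[\vec{Q}]_{:,:k} = [\vec{Q}]_{:,:k}[\vec{T}]_{:k,:k}+\beta_k\vec{q}_{k+1}\vec{e}_k^\T$ with orthonormal columns, a short induction on $j$ gives $\vec{A}^j\vec{q}_1 = [\vec{Q}]_{:,:k}[\vec{T}]_{:k,:k}^j\ehat$ for $0\leq j\leq k-1$: in the step, the extra term $\beta_k\vec{q}_{k+1}(\vec{e}_k^\T[\vec{T}]_{:k,:k}^j\ehat)$ vanishes because $[\vec{T}]_{:k,:k}^j\ehat$ is supported on the first $j+1\leq k-1$ coordinates when $j\leq k-2$. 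Also $[\vec{Q}]_{:,:k}^\T\vec{A}[\vec{Q}]_{:,:k}=[\vec{T}]_{:k,:k}$ since $[\vec{Q}]_{:,:k}^\T\vec{q}_{k+1}=\vec{0}$. Then for $1\leq j\leq 2k-1$ write $j=j_1+1+j_2$ with $0\leq j_1,j_2\leq k-1$ (possible since $j-1\leq 2(k-1)$) and compute
\[
\vec{q}_1^\T\vec{A}^j\vec{q}_1 = \big(\vec{A}^{j_1}\vec{q}_1\big)^\T\vec{A}\big(\vec{A}^{j_2}\vec{q}_1\big) = \ehat^\T[\vec{T}]_{:k,:k}^{j_1}\big([\vec{Q}]_{:,:k}^\T\vec{A}[\vec{Q}]_{:,:k}\big)[\vec{T}]_{:k,:k}^{j_2}\ehat = \ehat^\T[\vec{T}]_{:k,:k}^j\ehat,
\]
with the case $j=0$ being $\vec{q}_1^\T\vec{q}_1=1=\ehat^\T\ehat$. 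By the first paragraph these two quantities are exactly the $j$-th moments of $\Psi[\vec{A},\vec{v}]$ and $\Psi[[\vec{T}]_{:k,:k},\ehat]$, so the moments agree through degree $2k-1$.

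Combining the two previous paragraphs, $\Psi[[\vec{T}]_{:k,:k},\ehat]$ is a probability distribution function with exactly $k$ atoms whose moments match those of $\Psi[\vec{A},\vec{v}]$ up to degree $2k-1$, which is precisely the defining property of $\gq{k}{\Psi[\vec{A},\vec{v}]}$; uniqueness of such a rule follows because the $k$ nodes must be the roots of the degree-$k$ orthogonal polynomial of $\Psi[\vec{A},\vec{v}]$ and the weights are then pinned down by a nonsingular Vandermonde system. I expect the main obstacle to be the bookkeeping at the top of the moment range — verifying that the ``$j=j_1+1+j_2$ with a single $\vec{A}$ in the middle'' splitting together with $[\vec{Q}]_{:,:k}^\T\vec{A}[\vec{Q}]_{:,:k}=[\vec{T}]_{:k,:k}$ really reaches degree $2k-1$ and not merely $2k-2$ — and, secondarily, cleanly stating the no-breakdown hypothesis that is needed for the quadrature rule to genuinely have $k$ distinct nodes.
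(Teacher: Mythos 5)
Your proposal is correct. Note that the paper does not actually prove this proposition --- it is stated as a classical fact with citations to Gautschi and Golub--Meurant --- and the argument you give is precisely the standard one found in those references: the moments of \( \Psi[\vec{A},\vec{v}] \) and \( \Psi[[\vec{T}]_{:k,:k},\ehat] \) are \( \vec{v}^\T\vec{A}^j\vec{v} \) and \( \ehat^\T[\vec{T}]_{:k,:k}^j\ehat \), the identity \( \vec{A}^j\vec{q}_1=[\vec{Q}]_{:,:k}[\vec{T}]_{:k,:k}^j\ehat \) for \( j\leq k-1 \) follows by induction from the Lanczos factorization (the spurious term vanishes because \( [\vec{T}]_{:k,:k}^j\ehat \) has no \( k \)-th component for \( j\leq k-2 \)), and the split \( j=j_1+1+j_2 \) with \( [\vec{Q}]_{:,:k}^\T\vec{A}[\vec{Q}]_{:,:k}=[\vec{T}]_{:k,:k} \) does reach degree \( 2k-1 \), since \( j_1=j_2=k-1 \) is admissible; your bookkeeping there is sound. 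Your treatment of the structural half (nonzero first eigenvector components and simple eigenvalues of an irreducible tridiagonal matrix, hence exactly \( k \) atoms with positive weights summing to one) and your uniqueness sketch via the degree-\( k \) orthogonal polynomial plus a Vandermonde system are also correct, and they supply details the paper leaves implicit. The one caveat you rightly flag --- that \( \beta_1,\ldots,\beta_{k-1}\neq 0 \), i.e.\ \( \Psi[\vec{A},\vec{v}] \) has at least \( k \) points of increase --- is likewise suppressed in the paper's statement; for \( \vec{v}\sim\mathcal{U}(\mathcal{S}^{n-1}) \) it holds almost surely provided \( \vec{A} \) has at least \( k \) distinct eigenvalues, and in the breakdown case the quadrature rule degenerates to an exact representation of \( \Psi[\vec{A},\vec{v}] \) with fewer nodes, so stating the hypothesis (or this convention) is the only addition your write-up needs.
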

Since \( k \) is typically much smaller than \( n \), and since \( [\vec{T}]_{:k,:k} \) is tridiagonal, then the exact weighted CESM \( \Psi[[\vec{T}]_{:k,:k},\ehat] \) can be computed directly.
This allows for the efficient computation of Gaussian quadrature rules for \( \Psi[\vec{A},\vec{v}] \).

\subsection{Remaining proofs}

\begin{proof}[Proof of \cref{thm:wasserstein_apriori}]
Note that for any probability distribution functions \( \mu \) and \( \nu \) constant on the complement of \( [a,b] \),
\begin{align*}
    \W(\mu,\nu) \leq (b-a) \KS(\mu,\nu).
\end{align*}
For \( i=1,\ldots, n_{\text{v}} \), define \( \Psi_i \) as in \cref{alg:SLQ}.
Then, using \cref{thm:CESM_estimator_prob},
\begin{align*}
    \PP[ \KS(\Phi,\samp{\Psi_i} ) > t/2  ] \leq 2 n \exp (- ( n+2 ) n_{\textup{v}} t^2/4 ),
\end{align*}
so since \( \samp{ \Psi_i } \) and \( \Phi\) are constant on the complement of \( [ \lambda_{\textup{min}}[\vec{A}], \lambda_{\textup{max}}[\vec{A}] ] \),
\begin{align*}
    \PP[ \W(\Phi,\samp{\Psi_i}) > t I[\vec{A}]/2 ] \leq 2 n \exp \left(- n_{\textup{v}}( n+2 ) t^2/4 \right).
\end{align*}

By \cref{thm:wasserstein_moments} and the definition of Gaussian quadrature rule we have, with probability one,
\begin{align*}
    \W(\gq{k}{\Psi_i},\Psi_i) <  12 I[\vec{A}] (2k-1)^{-1}
\end{align*}
for \( i=1,\ldots, n_{\text{v}} \).
Thus, by the triangle inequality, again with probability one,
\begin{align*}
    \W( \samp{\gq{k}{\Psi_i}}, \samp{ \Psi_i } ) <  12 I[\vec{A}] (2k-1)^{-1}.
\end{align*}
Finally, we apply the triangle inequality to obtain,
\begin{align*}
    \W(\Phi,\samp{\gq{k}{\Psi_i}}
    \leq
    \W(\Phi,\samp{\Psi_i})
    + \W(\samp{\Psi_i}, \samp{\gq{k}{\Psi_i}})
\end{align*}
    Setting \( n_{\textup{v}} > 4 (n+2)^{-1} t^{-2} \log(2 n \eta^{-1}) \) and \( k > 12 t^{-1} + \frac{1}{2} \) ensures the sum of the two terms is at most \( I[\vec{A}] t \) with probability at least \( 1-\eta \).
\end{proof}

\begin{proof}[Proof of \cref{thm:apost_distances}]
This is a direct consequence of \cref{thm:gq_apost} and the triangle inequality.
\end{proof}

\begin{proof}[Proof of \cref{thm:wasserstein_lb}]
Let \( \Upsilon(x) = x \) on \( [0,1] \) be the probability distribution function for a uniform density on \( [0,1] \).

First, for any \( K \), non-negative weights \( \{ d_i \}_{i=1}^{K} \) summing to one and ordered points \( \{ \theta_i \}_{i=1}^{K} \) in \( [0,1] \), define \( \{ D_i \}_{i=0}^{K} \) by \( D_i = \sum_{j=1}^{i} d_i \) (where \( D_0 = 0  \)) and consider the functions
\begin{align*}
    \varphi(x) &= \sum_{i=1}^{K} d_i \bOne[ \theta_i \leq x ],
    \\
    \tilde{\varphi}(y) &= \theta_1 + \sum_{i=1}^{K-1} (\theta_{i+1} - \theta_i) \bOne[ D_i \leq y ].
\end{align*}
Note that,
\begin{align*}
    \W(\Upsilon,\varphi) 
    = \int_{0}^{1} | \varphi(x) - x| \d{x}
    = \int_{0}^{1} | \tilde{\varphi}(y) - y| \d{y}.
\end{align*}
Next, define 
\begin{align*}
    \mathcal{C} = &\{ \psi : \psi(0) = 0 , \psi(1) = 1,
    \\
    &\hspace{1em}\psi'(y) = 1 \: \forall y\in(D_i,D_{i+1}) , i=0,\ldots, K \}
\end{align*}
and observe that, because the contribution on each subinterval is independent of other subintervals,
\begin{align*}
    \W(\Upsilon,\varphi) 
    &\geq \min_{\psi\in\mathcal{C}} \int_0^1 | \tilde\varphi(y) - \psi(y)|\d{y}
    \\&= \min_{\psi\in\mathcal{C}} \sum_{i=0}^{K-1} \int_{D_i}^{D_{i+1}} | \tilde\varphi(y) - \psi(y)|\d{y}
    \\&=\sum_{i=0}^{K-1}  \min_{\psi\in\mathcal{C}} \int_{D_i}^{D_{i+1}} | \tilde\varphi(y) - \psi(y)|\d{y}
    \\&=\sum_{i=0}^{K-1} \left( \frac{D_{i+1} - D_i}{2} \right)^2.
\end{align*}
Thus, by the Cauchy--Schwarz inequality,
\begin{align*}
    \frac{1}{4}
    = \left( \sum_{i=0}^{K-1} \frac{d_i}{2} \right)^2
    &\leq \left( \sum_{i=0}^{K-1} 1^2 \right) \left(\sum_{i=0}^{K-1} \left( \frac{D_{i+1} - D_i}{2} \right)^2 \right)
    \\&= K \sum_{i=0}^{K-1} \left( \frac{D_{i+1} - D_i}{2} \right)^2
\end{align*}
so, since \( y \mapsto y \in \mathcal{C} \), we have that
\begin{align*}
    \W(\Upsilon,\varphi) 
    &= \int_{0}^{1} | \tilde{\varphi}(y) - y| \d{y}
    \\&\geq \min_{\psi\in\mathcal{C}} \int_{0}^{1} | \tilde{\varphi}(y) - \psi(y)| \d{y}
    \geq \frac{1}{4K}.
\end{align*}

We now construct a matrix whose CESM has small Wasserstein distance to \( \Upsilon \).
Let \( n = \lceil (4t)^{-1} \rceil \) and define a matrix \( \vec{A} \) with eigenvalues \( \{ (2n)^{-1} + k n^{-1} : k = 0,1,\ldots n-1  \} \). 
By the above argument, noting the the Cauchy--Schwarz inequality is an equality of all terms in the sum are equal, it is clear that,
\begin{align*}
    \W(\Upsilon,\Phi) =
    \frac{1}{4n}
    < t.
\end{align*}
Now, note that \( \samp{\gq{k}{\Psi_i}} \) is of the form of \( \varphi \) with \( K = n_{\text{v}} k \).
Thus, for any \( n_{\text{v}}, k \) such that \( n_{\text{v}} k < (8t)^{-1} \), with probability one,
\begin{align*}
    \W(\Upsilon,\samp{\gq{k}{\Psi_i}})
    \geq \frac{1}{4 n_{\text{v}} k } > 2t.
\end{align*}
Then, using the triangle inequality, again with probability one, 
\begin{align*}
    \W( \Phi,\samp{\gq{k}{\Psi_i}} )
    &\geq \W(\Upsilon,\samp{\gq{k}{\Psi_i}} ) - \W(\Upsilon,\Phi)
    > t.
\end{align*}
Since \( n_{\text{v}} k \) is the number of matrix vector products required by \cref{alg:SLQ}, and \( I(\vec{A}) < 1 \), the result holds.
\end{proof}

Note that this proof constructs two distribution functions with matching moments up to degree \( k \) whose Wasserstein distance is \( \Omega(k^{-1}) \).
This immediately implies that if the output of an algorithm used to approximate distribution functions depends only on the first \( k \) moments, there exist inputs on which the output error has Wasserstein distance \( \Omega(k^{-1}) \).

\section{Numerical verification and discussion}
\label{sec:numerical}
\begin{figure*}[ht]
    \centering
    \includegraphics[width=\textwidth]{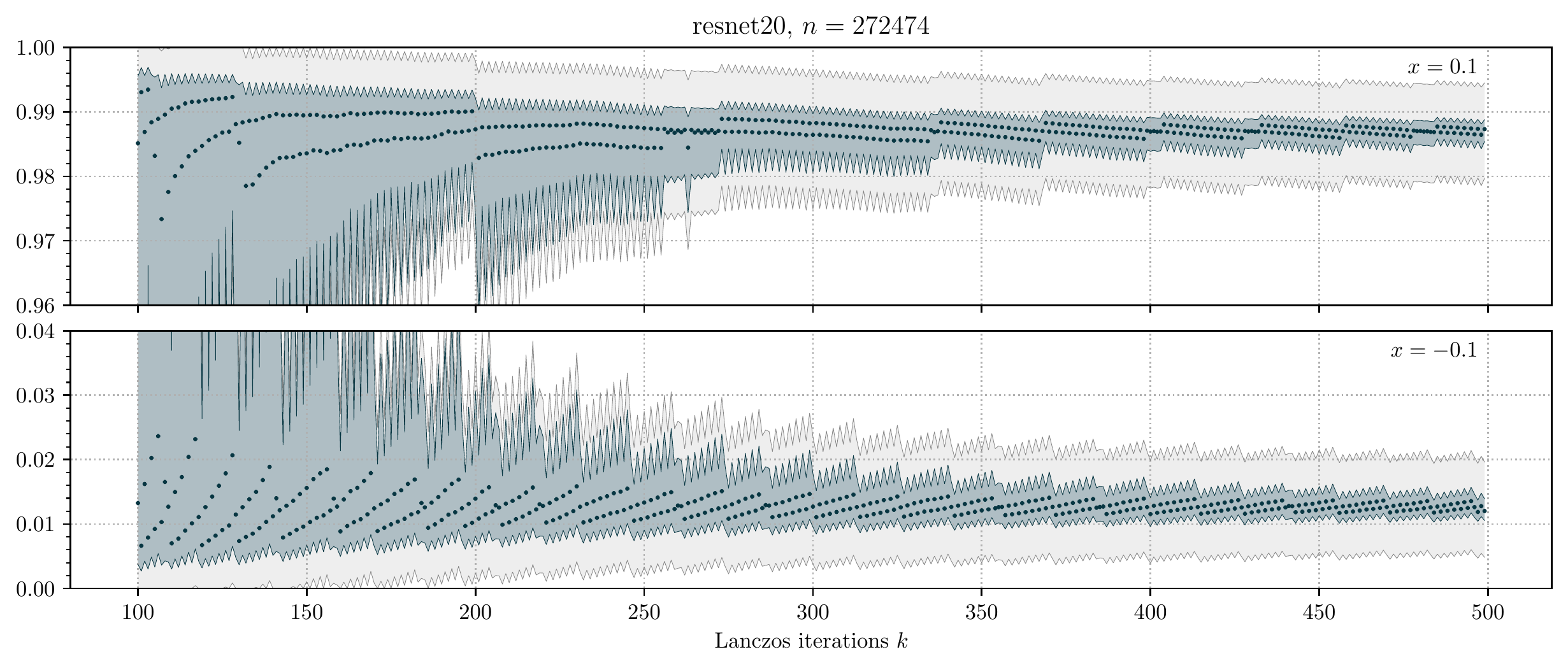}
    \vspace{-2em}
    \caption{
       Bounds for \( \Phi[\vec{A}](x) \) with \( \eta = 0.01 \).
        The average weighted CESM \( \samp{\Psi_i(x)} = \samp{ \Psi(\vec{A},\vec{v}_i)(x) } \) 
        is bounded between the averaged lower and upper bounds \( \samp{ \gqlower{k}{\Psi_i} } \) and \( \samp{ \gqupper{k}{\Psi_i} } \) 
        ({\protect\raisebox{0mm}{\protect\includegraphics[scale=.7]{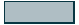}}})
        from \cref{thm:gq_interlace}.
        By \cref{thm:CESM_estimator_prob}, the CESM \( \Phi[\vec{A}](x) \) is within \( t \) of the average weighted CESM with probability at least \( 1-\eta \), and therefore lies betwen \( \samp{ \gqlower{k}{\Psi_i}(x) } - t \) and \( \samp{ \gqupper{k}{\Psi_i}(x) } + t \) 
        ({\protect\raisebox{0mm}{\protect\includegraphics[scale=.7]{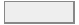}}})
        with this same probability.
        The output of \cref{alg:SLQ} \( \samp{ \gq{k}{\Psi_i} } \) 
        ({\protect\raisebox{0mm}{\protect\includegraphics[scale=.7]{imgs/legend/circle.pdf}}})
        is shown for reference.
        }
    \label{fig:CESM_ave_bound}
\end{figure*}

We demonstrate the effectiveness of our bounds on several test problems.
The convergence of \( \samp{\Psi_i} \) to \( \Phi[\vec{A}]\) is straightforward, so we focus on the convergence of the Gaussian quadrature rules \( \gq{k}{\Psi_i} \) of \( \Psi_i \).

Here, ``resnet20'' is a Hessian for the ResNet20 network \cite{he_zhang_ren_sun_16} trained on the Cifar-10 dataset.
To apply the Lanczos algorithm to this example, we use a slightly modified version of PyHessian \cite{yao_gholami_keutzer_mahoney_20}.
The ``California'' and ``Erdos992'' examples are graph adjacency matrices from the sparse matrix suite \cite{davis_hu_11}, the ``MNIST cov'' example is the covariance matrix of the MNIST training data, and ``uniform'' is a synthetic problem with 5000 eigenvalues evenly spaced between \( -1 \) and \( 1 \).

Our first example studies the global convergence of \( \samp{\gq{k}{\Psi_i}} \) to \( \samp{\Psi_i} \) as the number of Lanczos iterations \( k \) increases.
Specifically, we consider the upper bounds \( \W(\samp{\Psi_i}, \samp{\gq{k}{\Psi_i}}) \leq 12 I[\vec{A}] (2k-1)^{-1} \) from the proof of \cref{thm:wasserstein_apriori} and the bound \( \W(\samp{\Psi_i}, \samp{\gq{k}{\Psi_i}}) \leq \samp{ \sum_{j=0}^{k} \max\{ [d_i]_j, [d_i]_{j+1} \} ([\theta_i]_{j+1} - [\theta_i]_j) } \) from \cref{thm:apost_distances}.
These bounds, with the true extreme eigenvalues of \( \vec{A} \) used for \( [\theta_i]_0 \) and \( [\theta_i]_{k+1} \) (except for on the ``resnet20'' example where these terms are omitted), are shown illustrated in \cref{fig:wasserstein_bounds} for several test problems. 
Qualitatively, we observe several types of behavior in both the true Wasserstein distance and the bounds.

First, when \( k \) is small relative to \( n \), the convergence rate is similar to \( O(k^{-1}) \).
This behavior is especially visible on the ``uniform'' example, where the true CESM is relatively ``smooth'' and aligns with the intuition behind our lower bound \cref{thm:wasserstein_lb}.
On the other hand, as observed on the ``MNIST cov'' example, when \( k \) becomes sufficiently large, the convergence accelerates past \( O(k^{-1}) \).
This is also unsurprising since when \( k = n \), \( \gq{k}{\Psi_i} = \Psi_i \).

Second, we observe the Wasserstein distance bound from \cref{thm:apost_distances} sometimes stagnates.
There are two causes for this. 
The first cause of stagnation, observed on the ``MNIST cov'' example, is due to the fact that the bound from \cref{thm:apost_distances} will never be smaller than \( I[\vec{A}] n^{-1} \).
The second source of stagnation, due to many tightly clustered eigenvalues, is observed in the ``California'' and ``Erdos992'' examples.
To understand this stagnation, suppose there are many eigenvalues clustered near \( x \); i.e. in the interval \( (x-\epsilon,x+\epsilon) \) for some small \( \epsilon \).
Let,
\begin{align*}
    c := \max_{ \lambda \in \Lambda[\vec{A}]} \{ \lambda  < x-\epsilon \} 
    ,&&
    d := \min_{ \lambda \in \Lambda[\vec{A}]} \{ \lambda  > x+\epsilon \}.
\end{align*}
so that there are no eigenvalues in the \( (c,d)\setminus(x-\epsilon,x+\epsilon) \).
Then, if \( |d-c| \gg \epsilon \) and \( k \) is not too large, the Gaussian quadrature rule will have only one node in \( (c,d) \) located very near to \( x \).
As a result, the bound will stagnate near
\begin{align}
    \label{eqn:stagnate}
    ( \Phi(d^-) - \Phi(c) ) \: |d-c|.
\end{align}
If the cluster of eigenvalues are all exactly equal so \( \epsilon \) can be taken to be zero, this stagnation will persist for all \( k \).
If they are not, then eventually the Gaussian quadrature rule will place more nodes in this interval and the bound will recover, as observed in both examples.
In \cref{sec:added_points}, we discuss a heuristic approach to address this stagnation.

Our second example studies bounds for \( \Phi[\vec{A}](x) \) for a fixed value of \( x \).
These can be used to obtain upper and lower bounds for the number of eigenvalues in an interval.
Specifically, we consider the lower and upper bounds \( \samp{\gqlower{k}{\Psi_i}(x)} - t \) and \( \samp{\gqupper{k}{\Psi_i}(x)} + t \), which provide probabilistic upper and lower bounds according to \cref{thm:CESM_estimator_prob,thm:gq_apost}.
In \cref{fig:CESM_ave_bound}, for fixed values of \( x \), we plot the bounds as a function of the number of Lanczos iterations \( k \) for the ``resnet20'' example and as a function of \( x \).
Together, these plots imply that with probability \( 99/100 \), roughly 94-99\% of the eigenvalues are in the interval \( [-0.1,0.1] \).

\section{Outlook}

The analysis in this paper gives rigorous bounds on the accuracy of SLQ for spectrum approximation.
These bounds are suited to the parameter ranges encountered in practice and demonstrate that SLQ is a viable method for spectrum approximation in many applications.
As a result, we hope our analysis will allow practitioners to obtain more precise and theoretically justifiable insights about their applications without the need for heuristics.

More broadly, SLQ and KPM fall into a larger class of spectrum approximation algorithms which approximate iid samples of the weighted CESM using information from Krylov subspaces.
However, the exact relationship between these algorithms has not been fully described, making the tradeoffs between the algorithms murky at best.
In order to shed light on the tradeoffs between these algorithms, a comprehensive treatment providing a \emph{unified} perspective is needed.

\clearpage

\bibliographystyle{icml2021}
\bibliography{lanczos_spec}


\clearpage

\onecolumn
\appendix

\section{Omitted proofs}
\label{sec:proofs}

In this section, we provide several of the proofs omitted in the main text.

\subsection{Proof of results}
We provide proofs of \cref{thm:gq_interlace,thm:gq_apost}.
These both follow from \cref{thm:moments_CDF}.

\gqinterlace*

\begin{proof}
    Suppose \( \gq{k}{\mu} \neq \mu \) and define \( \gamma(x) = \mu(x) - \gq{k}{\mu}(x) \).
    Observe that for any \( j = 1,\ldots, k-1 \), \( \gq{k}{\mu} \) is constant on \( (\theta_j,\theta_{j+1}) \), so \( \gamma \) is weakly increasing on this interval.
    \Cref{thm:weak_sign} states that if \( \gamma \) change signs at some point \( y_j \in (\theta_j,\theta_{j+1}) \) then \( \gamma(x) > 0 \) for all \( x\in (y_j,\theta_{j+1}) \), \( j=1,\ldots, k-1 \) and \( \gamma(x) \leq 0 \) for all \( x\in(\theta_j,y_j) \), so \( \gamma \) cannot change signs at any other point in \( (\theta_j,\theta_{j+1}) \).
     
    Observe further that on \( (a,\theta_1) \), \( \gq{k}{\mu}(x) = 0 \leq \mu(x) \) so \( \gamma(x) \geq 0 \), and on \( (\theta_n,b) \), \( \gq{k}{\mu}(x) = 1 \geq \mu(x) \) so \( \gamma(x) \leq 0 \).
    Thus, by \cref{thm:weak_sign}, no sign changes can occur on these intervals.

    As a result, the only possible locations for sign changes of \( \gamma \) on \( (a,b) \) are \( \{\theta_j\}_{j=1}^{k} \) and \( \{y_j\}_{j=1}^{k-1} \).
    This is exactly \( 2k-1 \) possible sign changes, so by \cref{thm:moments_CDF}, a sign change must occur at each of these points.
    In particular, 
    since \( \gamma \) has a sign change at \( \theta_j \),
    \begin{align*}
        \gq{k}{\mu}(\theta_j^-) 
        \leq \mu(\theta_j) \leq \gq{k}{\theta_j}.
    \end{align*}
    Therefore, for \( x\in (\theta_j,\theta_{j+1}) \),
    \begin{align*}
        \gqlower{k}{\mu}(x) =
        \gq{k}{\mu}(\theta_{j}^-) \leq
        \mu(\theta_j) \leq
        \mu(x) 
        &\leq \mu(\theta_{j+1})
        \leq \gq{k}{\mu}(\theta_{j+1})
        = \gqupper{k}{\mu}(x).
        \tag*{\qedhere}
    \end{align*}
\end{proof}

\gqapost*
\begin{proof}
Note that
\begin{align*}
    | \mu(x) - \gq{k}{x} | 
    &\leq \max\{ | \gq{k}{\mu}(x) - \gqupper{k}{\mu}(x) |, |\gq{k}{\mu}(x) - \gqlower{k}{\mu}(x) | \} 
    \\&= \max\{ d_j,d_{j+1} \} \bOne[ x\in [\theta_j,\theta_{j+1}) ].
\end{align*}
Thus,
\begin{align*}
    \KS(\mu,\gq{k}{\mu})
    &= \sup_{x} | \mu(x) - \gq{k}{x} |
    \leq \max_{j=1,\ldots, k} d_j. 
\end{align*}
and
\begin{align*}
    \W(\mu,\gq{k}{\mu})
    &= \int_{a}^{b} |\mu(s) - \gq{k}{\mu}(s) | \d{s}
    \leq \sum_{j=0}^{k} \max\{d_j,d_{j+1}\} ( \theta_{j+1} - \theta_j ).
    \tag*{\qedhere}
\end{align*}
\end{proof}

\subsection{Definitions of random variables}

\begin{definition}
The beta distribution with parameters \( \alpha,\beta \), denoted \( \operatorname{Beta}(\alpha,\beta) \), is defined by density function on \( [0,1] \),
\begin{align*}
    x\mapsto \frac{\Gamma(\alpha + \beta)}{\Gamma(\alpha)\Gamma(\beta)}x^{\alpha-1} (1-x)^{\beta-1}.
\end{align*}
   
\end{definition}

\begin{definition}
The chi-square distribution with positive integer parameter \( k \), denoted \( \chi_k^2 \), 
is defined by density function on \( [0,\infty) \),
\begin{align*}
    x\mapsto \frac{1}{2^{k/2}\Gamma(k/2)} x^{k/2-1} \exp(-x/2).
\end{align*}
\end{definition}

\subsection{Other proofs}

For reader convenience, we also provide proofs for several of the technical results cited throughout the main portion of the text.

To begin, we provide a standard moment generating function argument to bound the tails of random variables. 
\subgaussiansum*

\begin{proof}

WLOG assume \( \EE[X] = 0 \).
\begin{align*}
    \PP[ n_{\text{v}} \samp{X_i} \geq n_{\text{v}} t] 
    &= \PP[ \exp( \lambda n_{\text{v}}\samp{X_i} ) \geq \exp(\lambda n_{\text{v}} t)]
    \\&\leq \exp(- n_{\text{v}} \lambda t) \EE[ \exp(\lambda n_{\text{v}} \samp{X_i}) ] \tag{Markov}
    \\&= \exp(- n_{\text{v}} \lambda t) \EE[ \exp(\lambda X) ]^{n_{\text{v}}} \tag{iid}
    \\&\leq \exp(- n_{\text{v}} \lambda t) \exp( n_{\text{v}} \lambda^2 \sigma^2 / 2) \tag{sub-Gaussian}
    \\&= \exp(- n_{\text{v}} \lambda t + n_{\text{v}} \lambda^2 \sigma^2 / 2). 
\end{align*}
This expression is minimized when \( \lambda = t / \sigma^2 \) from which we obtain,
\begin{align*}
    \PP[ \samp{X_i} \geq t] \leq \exp \left( - \frac{n_{\text{v}}}{2\sigma^2} t^2 \right). \tag*{\qedhere}
\end{align*}
\end{proof}

\wassersteinmoments*

\begin{proof}
Define \( f : [a,b] \to \R \) by
\begin{align*}
    f(x) = - \int_{a}^{x} \operatorname{sign}(\mu(s) - \nu(s)) \d{s}.
\end{align*}
Then, \( f \) is continuous and piecewise linear with slope \( \pm 1 \), and as a result,  \( f \) is of bounded variation and \( |f(y) - f(z)| \leq |y-z| \) for all \( y,z\in[a,b] \).
Moreover, since \( \mu \) and \( \nu \) are weakly increasing bounded functions, they are each of bounded variation, as is the difference \( \mu-\nu \).
    Therefore, we can therefore integrate by parts over the closed interval \( [a,b] \), see for instance \citep[Theorem 3.36]{folland_99}, to obtain
\begin{align*}
    \W(\mu,\nu)
    = \int_{a}^{b} |\mu(s)-\nu(s)| \d{s}
    &= \int_{a}^{b} \operatorname{sign}(\mu(s)-\nu(s)) (\mu(s)-\nu(s)) \d{s}
    \\&= \big[-f(s)(\mu(s)-\nu(s))\big]_{a^-}^b + \int_{a}^{b} f(s) \d(\mu(s)-\nu(s))
    \\&= \int_{a}^{b} f(s) \d(\mu(s)-\nu(s)).
\end{align*}

Next, note that for any polynomial \( p \) of degree \(  < k \),    
\begin{align*}
    \int_{a}^{b} f(s) \d(\mu(s) - \nu(s))
    &= \int_{a}^{b} (f(s) - p(s)) \d(\mu(s)-\nu(s))
    \\&\leq \int_{a}^{b} \|f - p\|_{\infty} | \d\mu(s) - \nu(s) | 
    \\&\leq 2 \| f - p \|_{\infty}. 
\end{align*}

Finally, Jackson's theorem, see for example \citep[Theorem 3.7.2]{han_atkinson_09}, asserts that if \( f \) satisfies \( |f(y) -f(z)| \leq |y-z| \) for all \( y,z\in[a,b] \), then over \( [a,b] \),
\begin{align*}
    \min_{\deg(p) \leq s}  \| f - p \|_{\infty} 
    \leq (b-a) (1+\pi^2/2) s ^{-1}.
\end{align*}
Combining these statements and minimizing over polynomials \( p \) of degree \( \leq s \) yields the result.
\end{proof}

\weaksign*
\begin{proof}
Suppose that \( \gamma \) has a sign change at \( x \) and let \( x' \) be as in the definition.
Then, for any \( t > x' \), since \( \gamma \) is weakly increasing, \( \gamma(x') \leq \gamma(t) \).
This means that \( \gamma(t) \gamma(x') < 0 \) implies \( \gamma(x') < 0 \) and \( \gamma(t) > 0 \).
Since \( x = \inf\{t > x' : \gamma(t) \gamma(x') < 0 \}\), we have \( \gamma(y) > 0 \) for all \( y\in(x,b) \) and \( \gamma(y) \leq 0 \) for all \( y\in(x',x) \) and \( \gamma(x') < 0 \)

The reverse direction follows directly from the definition of sign change.
\end{proof}

Finally, we recall the proof of \cref{thm:moments_CDF}.
\momentsCDFinterlace*
\begin{proof}
Suppose \( \gamma \neq 0 \).
For the sake of contradiction, suppose also that \( \gamma \) has fewer than \( s \) sign changes.
Then, there exists a degree at most \( s-1 \) polynomial \( r \) such that for all \( x \in [a,b] \), \( r(x) \gamma(x) \geq 0 \); i.e. pick \( r \) to have a sign change at every sign change of \( \gamma \).

Thus, since \( \gamma \neq 0 \) is right continuous and \( r \) is continuous, 
\begin{align*}
    \int_{a}^{b} r(x) \gamma(x) \d{x} > 0.
\end{align*}

Let \( R \) be an antiderivative of \( r \).
Then, by integrating by parts over the closed interval \( [a,b] \),
\begin{align*}
    \int_{a}^{b} r(x) \gamma(x) \d{x} 
    = \left[ R(x) \gamma(x) \right]_{a^-}^b - \int_{a}^{b} R(x) \d\gamma(x).
\end{align*}
Since \( \mu \) and \( \nu \) are equal on the compliment of \( [a,b] \),
\begin{align*}
    \left[ R(x) \gamma(x) \right]_a^b 
    = R(b) ( \mu(b) - \nu(b)) - R(a) (\mu(a^-) - \nu(a^-))
    = 0
\end{align*}
and, since \( \mu \) and \( \nu \) share moments up to degree \( s \),
\begin{align*}
    \int_{a}^{b} R(x) \d\gamma(x)
    = \int_{a}^{b} R(x) (\d\mu(x) - \d\nu(x))
    =0 .
\end{align*}

This contradicts the earlier assertion that this integral is non-zero.
\end{proof}

\section{Additional numerical experiments}

In \cref{fig:CESM_ave_bound_x} we illustrate the bounds on \( \Phi[\vec{A}](x) \) for varying \( x \) and fixed \( k \).
These are the same quantities as shown in \cref{fig:CESM_ave_bound}, which showed them for fixed \( x \) and varying \( k \).

\begin{figure*}[htb]
    \centering
    \includegraphics[width=\textwidth]{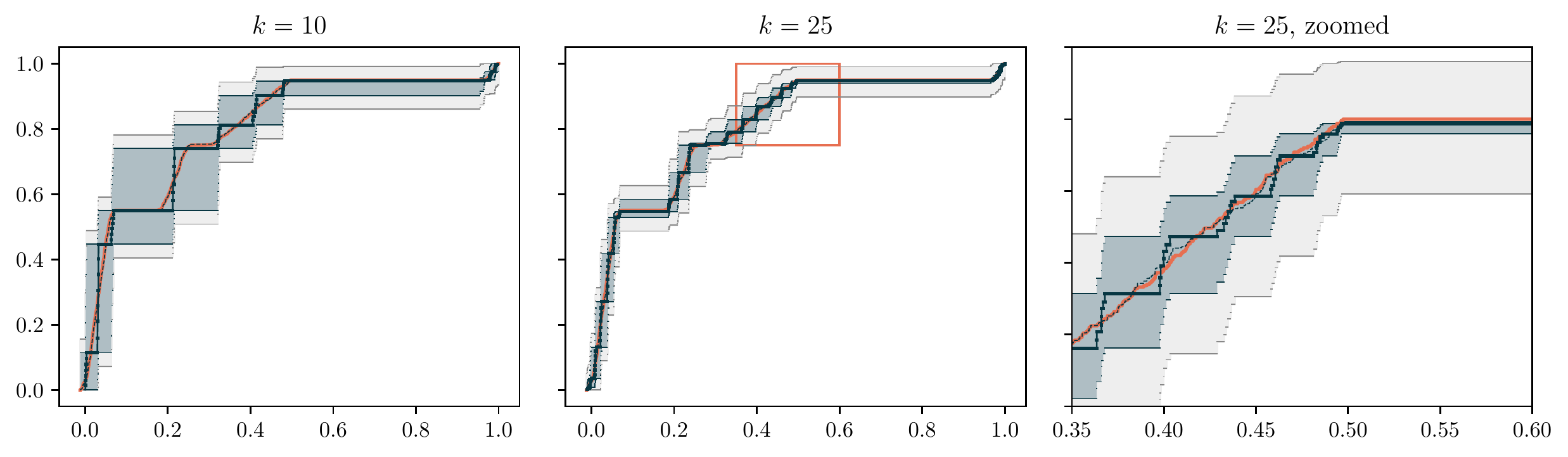}
    \caption{
        Bounds for \( \Phi[\vec{A}](x) \) 
        ({\protect\raisebox{.1mm}{\protect\includegraphics[scale=.75]{imgs/legend/solid_red.pdf}}})
        with \( n_{\text{v}} = 7 \).
        The average weighted CESM \( \samp{\Psi_i(x)} = \samp{ \Psi(\vec{A},\vec{v}_i)(x) } \) 
        is bounded between the averaged lower and upper bounds \( \samp{ \gqlower{k}{\Psi_i} } \) and \( \samp{ \gqupper{k}{\Psi_i} } \) 
        ({\protect\raisebox{.1mm}{\protect\includegraphics[scale=.75]{imgs/legend/patch.pdf}}})
        from \cref{thm:gq_interlace}.
        By \cref{thm:CESM_estimator_prob}, the CESM \( \Phi[\vec{A}](x) \) is within \( t \) of the average weighted CESM with probability at least \( 1-\eta \), and therefore lies betwen \( \samp{ \gqlower{k}{\Psi_i}(x) } - t \) and \( \samp{ \gqupper{k}{\Psi_i}(x) } + t \) 
        ({\protect\raisebox{.1mm}{\protect\includegraphics[scale=.75]{imgs/legend/patch_grey.pdf}}})
        with this same probability.
        The output of \cref{alg:SLQ} \( \samp{ \gq{k}{\Psi_i} } \) 
        ({\protect\raisebox{.1mm}{\protect\includegraphics[scale=.75]{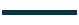}}})
        is shown for reference.
           }
    \label{fig:CESM_ave_bound_x}
\end{figure*}
We make several remarks. 
First, observe that both the average Gaussian quadrature rule \( \samp{\gq{k}{\Psi_i}} \) and the upper and lower bounds \( \samp{\gqupper{k}{\Psi_i}} \) and \( \samp{\gqlower{k}{\Psi_i}} \) are piecewise constant with \( k n_{\text{v}} \) points of discontinuity.
Importantly, however, the points of discontinuity corresponding to different samples concentrate when \( k \) is relatively small compared to \( n \).
This is most visible in the \( k=10 \) plot, which at first glance, appears to only have \( 10 \) points of discontinuity.
This concentration is essentially due to the fact that the weighted CESMs \( \Psi_i \) concentrate about \( \Phi[\vec{A}]\), so moments and therefore the Gaussian quadrature nodes and weights also concentrate.
For further discussions on this behavior we refer readers to \cite{kuijlaars_00,vargas_kulkarni_20}.
Second, observe that the CESM \( \Phi[\vec{A}](x) \) is not necessarily contained between \( \samp{\gqlower{k}{\Phi_i}(x)} \) and \( \samp{\gqupper{k}{\Phi_i}(x)} \).
Rather, the average weighted CESM \( \samp{\Phi_i} \) is containe between these values, and \( \Phi[\vec{A}](x) \) is within \( t \) of \( \samp{\Phi_i} \) with some probability \( \eta \).

In \cref{fig:CESM_ave_bound_ritz_values} we illustrate the same bounds as \cref{fig:CESM_ave_bound} along with a plot of the eigenvalues of \( [\vec{T}_i]_{:k,:k} \). 
We make several remarks. 
First, concentration of the Gaussian quadrature nodes is clearly visible.
Second, the position of these nodes as the iterations change provide insight into the sawtooth behavior of the bounds on \( \samp{\gq{k}{\Psi_i}} \).
Moreover, we observe that the value of \( \samp{\gq{k}{\Psi_i}(x)} \) is very near to the limiting value of \( \samp{\gq{k}{\Psi_i}(x)} \) (which we estimate by computing \( \samp{\gq{3000}{\Psi_i}(x)} \)) whenever there are Guassian quadrature nodes near to \( x \).
Further study of this phenomena may yield useful heuristic CESM estimation.

\afterpage{
\begin{figure}[h]
    \centering
    \includegraphics[width=\textwidth]{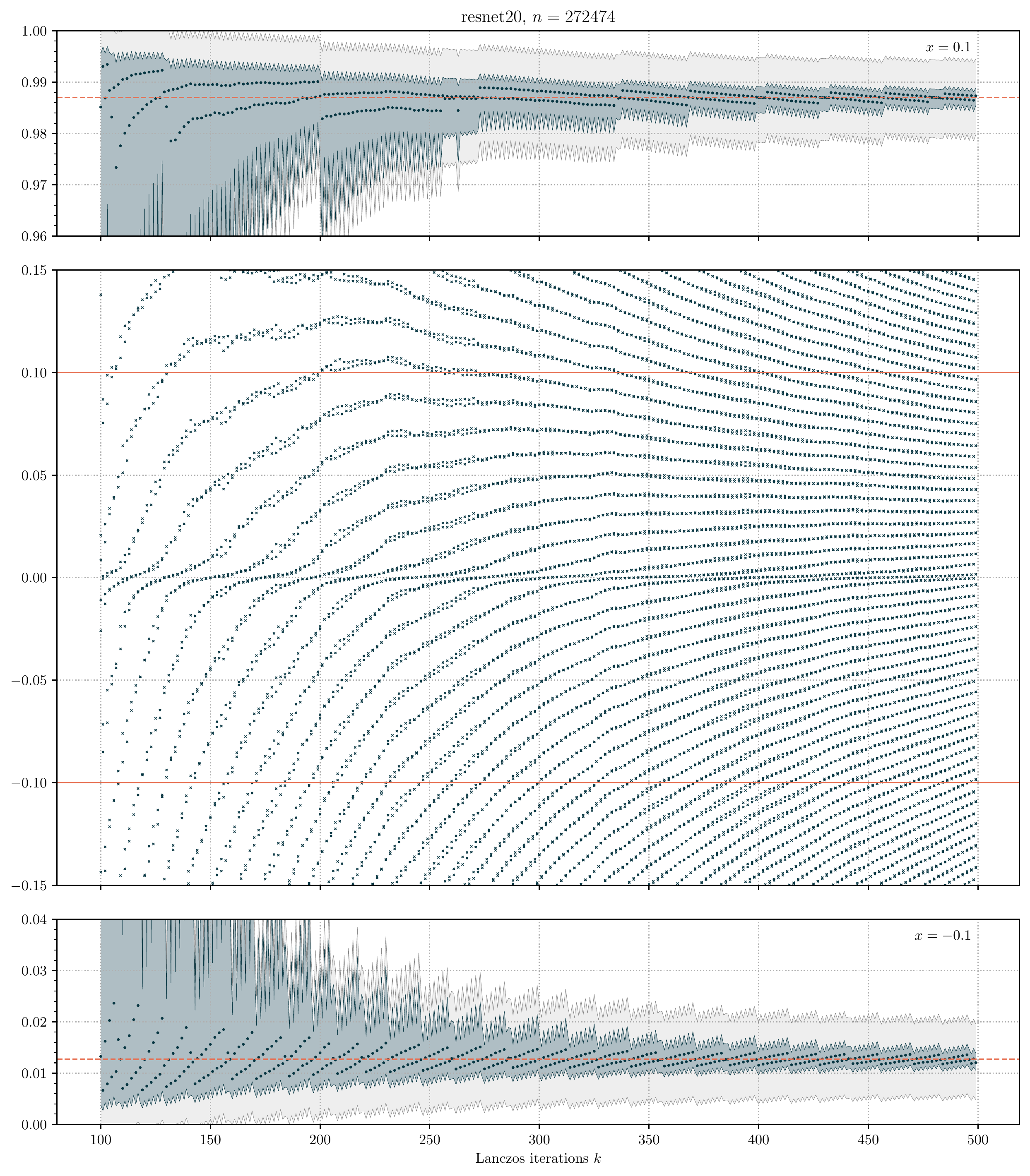}
    \caption{
        \emph{Top/Bottom}: 
        Bounds for \( \Phi[\vec{A}](x) \) when \( x = \pm -.01 \) and \( n_{\text{v}} = 2 \).
        The average weighted CESM \( \samp{\Psi_i(x)} = \samp{ \Psi(\vec{A},\vec{v}_i)(x) } \) 
        is bounded between the averaged lower and upper bounds \( \samp{ \gqlower{k}{\Psi_i} } \) and \( \samp{ \gqupper{k}{\Psi_i} } \) 
        ({\protect\raisebox{.1mm}{\protect\includegraphics[scale=.75]{imgs/legend/patch.pdf}}})
        from \cref{thm:gq_interlace}.
        By \cref{thm:CESM_estimator_prob}, the CESM \( \Phi[\vec{A}](x) \) is within \( t \) of the average weighted CESM with probability at least \( 1-\eta \), and therefore lies betwen \( \samp{ \gqlower{k}{\Psi_i}(x) } - t \) and \( \samp{ \gqupper{k}{\Psi_i}(x) } + t \) 
        ({\protect\raisebox{.1mm}{\protect\includegraphics[scale=.75]{imgs/legend/patch_grey.pdf}}})
        with this same probability.
        The output of \cref{alg:SLQ} \( \samp{ \gq{k}{\Psi_i} } \) 
        ({\protect\raisebox{.1mm}{\protect\includegraphics[scale=.75]{imgs/legend/circle.pdf}}})
        is shown for reference.
        \emph{Middle}: Eigenvalues of \( [\vec{T}_i]_{:k,:k} \) for \( i=1,2 \).
        \emph{Remarks}: 
        In the middle plot, the eigenvalues of \( [\vec{T}_i]_{:k,:k} \) ({\protect\raisebox{0mm}{\protect\includegraphics[scale=.7]{imgs/legend/x.pdf}}}) are very close for both samples. This aligns with the expected concentration. 
        We also observe that when there is an eigenvalues of \( [\vec{T}_i]_{:k,:k} \) near the value of \( x \) of interest ({\protect\raisebox{0mm}{\protect\includegraphics[scale=.7]{imgs/legend/solid_red.pdf}}}), \( \samp{ \gq{k}{\Psi_i(x}) } \) is near its limiting value \( \samp{\Psi_i(x)} \) (estimated by \( \samp{\gq{3000}{\Psi_i}} \) ({\protect\raisebox{0mm}{\protect\includegraphics[scale=.7]{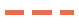}}})).
           }
    \label{fig:CESM_ave_bound_ritz_values}
\end{figure}
\clearpage
}

\subsection{Added points}
\label{sec:added_points}

In \cref{sec:numerical}, we remarked that on spectrums with many clustered eigenvalues, our bounds sometimes encounter plateaus.
In this section, we discuss a heuristic approach to address this issue by introducing eigenvalues with known weights to \( \vec{A} \).

Towards this end, define,
\begin{align*}
    \bar{\vec{A}} = 
    \begin{bmatrix}
        \vec{A} \\
        & y
    \end{bmatrix}
    ,&&
    \bar{\vec{v}}_i
    = \begin{bmatrix}
        \vec{v}_i \\ z
    \end{bmatrix}
\end{align*}
for scalars \( y \) and \( z \).
By increasing the value of \( z \), we can introduce an Gaussian quadrature node near \( y \).
Thus, we may introduce nodes near to locations suspected of having large jumps in the spectrum. 
Often, the origin is such a point, since many matrices may be low rank or close to low rank.

As before, we have that,
\begin{equation}
\label{eqn:added_sample_bd}
\begin{aligned}
    \samp{ \Psi[\bar{\vec{A}},\bar{\vec{v}}_i](x)  }
    &\geq \samp{ \gqlower{k}{\Psi(\bar{\vec{A}},\bar{\vec{v}}_i)}(x) }
    \\
    \samp{ \Psi[\bar{\vec{A}},\bar{\vec{v}}_i](x)  }
    &\leq \samp{ \gqupper{k}{\Psi(\bar{\vec{A}},\bar{\vec{v}}_i)}(x) }
\end{aligned}
\end{equation}
Similarly, \( \Psi(\bar{\vec{A}},\bar{\vec{v}}_i) \) will concentrate strongly around its mean. 
Note, however, that the mean is not \( \Phi(\bar{\vec{A}}) \), since the weight on \( y \) is \( z^2 \) rather than \( 1/(n+1) \).
However, we do have that with probability at least \( 1-2\exp(-n_{\text{v}} (n+2) t^{-2}) \),
\begin{align*}
    \EE[ \Psi[\bar{\vec{A}},\bar{\vec{v}}_i](x) ] 
    &\geq \samp{ \gqlower{k}{\Psi[\bar{\vec{A}},\bar{\vec{v}}_i]}(x) } - t
    \\
    \EE[ \Psi[\bar{\vec{A}},\bar{\vec{v}}_i](x) ] 
    &\leq \samp{ \gqupper{k}{\Psi[\bar{\vec{A}},\bar{\vec{v}}_i]}(x) } + t
\end{align*}
Note that
\begin{align*}
    \Phi[\vec{A}](x) = 
    \EE[ \Psi[\bar{\vec{A}},\bar{\vec{v}}_i](x) ] - z^2 \bOne[y\leq x].
\end{align*}
so
\begin{equation}
\label{eqn:added_full_bd}
\begin{aligned}
    \Phi[\vec{A}](x)
    &\geq \samp{ \gqlower{k}{\Psi[\bar{\vec{A}},\bar{\vec{v}}_i]}(x) } - z^2 \bOne[y\leq x] - t
    \\ 
    \Phi(\vec{A})(x)
    &\leq \samp{ \gqupper{k}{\Psi[\bar{\vec{A}},\bar{\vec{v}}_i]}(x) } -z^2 \bOne[y\leq x] + t.
\end{aligned}
\end{equation}

In \cref{fig:eps_rank_added}, we illustrate the effect of introducing a node at \( 0.05 \), between the large cluster of eigenvalues at the origin, and our point of evaluation, \( x = 0.1 \).
The larger the value of \( z \), the earlier in the iteration the Gaussian quadrature node near \( y \) appears.
However, the larger value of \( z \) also weakens the quality of the lower bounds.

\begin{figure*}[t]
    \includegraphics[width=\textwidth]{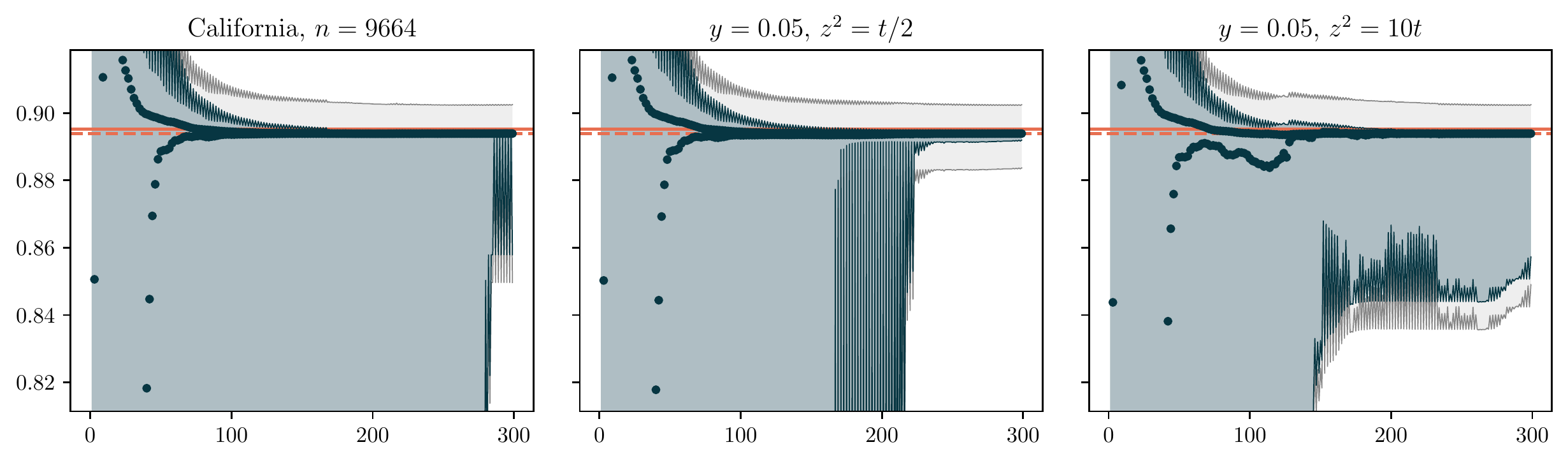}
    \caption{
        Bounds for \( \Phi[\vec{A}](x) \) 
        ({\protect\raisebox{.1mm}{\protect\includegraphics[scale=.75]{imgs/legend/solid_red.pdf}}})
        when \( x = 0.01 \) and \( n_{\text{v}} = 22 \).
        The average weighted CESM \( \samp{\Psi_i(x)} = \samp{ \Psi(\vec{A},\vec{v}_i)(x) } \) 
        is bounded between the averaged lower and upper bounds \( \samp{ \gqlower{k}{\Psi_i} } \) and \( \samp{ \gqupper{k}{\Psi_i} } \) 
        ({\protect\raisebox{.1mm}{\protect\includegraphics[scale=.75]{imgs/legend/patch.pdf}}})
        from \cref{thm:gq_interlace}.
        By \cref{thm:CESM_estimator_prob}, the CESM \( \Phi[\vec{A}](x) \) is within \( t \) of the average weighted CESM with probability at least \( 1-\eta \), and therefore lies betwen \( \samp{ \gqlower{k}{\Psi_i}(x) } - t \) and \( \samp{ \gqupper{k}{\Psi_i}(x) } + t \) 
        ({\protect\raisebox{.1mm}{\protect\includegraphics[scale=.75]{imgs/legend/patch_grey.pdf}}})
        with this same probability.
        The output of \cref{alg:SLQ} \( \samp{ \gq{k}{\Psi_i} (x)} \) 
        ({\protect\raisebox{.1mm}{\protect\includegraphics[scale=.75]{imgs/legend/circle.pdf}}})
        and \( \samp{ \Psi_i(x) } \)
        ({\protect\raisebox{.1mm}{\protect\includegraphics[scale=.75]{imgs/legend/dashed_red.pdf}}})
        are shown for reference.
    }
    \label{fig:eps_rank_added}
\end{figure*}

\section{Gaussian quadrature convergence for smooth and analytic functions}
\label{sec:GQ_smooth_bounds}

Most existing analysis for SLQ for spectrum approximation has been in terms of smoothed approximations to the empertical spectral measure (ESM) \( \d{\Phi[\vec{A}](x)}/\d{x} \) or the CESM \( \Phi[\vec{A}](x) \) of \( \vec{A} \).
Such smoothed approximations can be obtained by the integral of certain functions against \( \d{\Phi[\vec{A}](x)} \) \cite{lin_saad_yang_16,han_malioutov_avron_shin_17,ghorbani_krishnan_xiao_19}.
Specifically, if \( \upsilon \) approximates a delta function and \( \Upsilon \) approximates a step function, then
\begin{align*}
    \frac{\d{\Phi[\vec{A}](x)}}{\d{x}}
    \approx \int \upsilon(x) \d{\Phi[\vec{A}](x)}
    ,&&
    \Phi[\vec{A}](x) \approx \int \Upsilon(x) \d{\Phi[\vec{A}](x)}.
\end{align*}
Defining \( \Psi_i = \Psi(\vec{A},\vec{v}_i) \) for \( \{ \vec{v}_i \} \stackrel{\text{iid}}{\sim} \mathcal{U}(\mathcal{S}^{n-1}) \), the smoothed ESM/CESM can therefore respectively be approximated by
\begin{align*}
    \left\langle \int \upsilon(x) \d{\gq{k}{\Psi_i}(x)} \right\rangle
    ,&&
    \left\langle \int \upsilon(x) \d{\gq{k}{\Psi_i}(x)} \right\rangle.
\end{align*}
If these \( \upsilon \) and \( \Upsilon \) are smooth or analytic, the following well known bounds apply.

\begin{ctheorem}[{\citep[Section 6.2]{golub_meurant_09}}]
    \label{thm:gq_smooth}
    Let \( \mu \) be a probability distribution function constant on the compliment of \( [a,b] \) and let \( p_k(s) \) be the \( k \)-th monic orthogonal polynomial of \( \mu \).
    Then, if \( f \) is \( 2k \) times differentiable, for some \( \xi \in [a,b] \),
    \begin{align*}
        \int f(s) \d\mu(s) - \int f(s) \d\gq{k}{\mu}(s)
        = \frac{f^{(2k)}(\xi)}{(2k)!} \int \prod_{i=1}^{k} p_k(s)^2 \d\mu(s).
    \end{align*}
\end{ctheorem}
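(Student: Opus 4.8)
The plan is to run the classical Gaussian-quadrature remainder argument built on Hermite interpolation at the quadrature nodes. Write \( \gq{k}{\mu} \) as the discrete measure with nodes \( \{\theta_j\}_{j=1}^{k} \) and weights \( \{d_j\}_{j=1}^{k} \); by construction of the \( k \)-point Gaussian rule, the \( \theta_j \) are precisely the (simple, real, in \( [a,b] \)) roots of the \( k \)-th monic orthogonal polynomial \( p_k \), and \( \int q \, \d\mu = \int q \, \d\gq{k}{\mu} \) for every polynomial \( q \) with \( \deg q \le 2k-1 \). This exactness is essentially the definition of \( \gq{k}{\mu} \) (matching moments up to degree \( 2k-1 \)), and also follows by writing \( q = p_k \cdot (\text{quotient}) + (\text{remainder}) \) and using \( \mu \)-orthogonality of \( p_k \) to all lower-degree polynomials together with \( p_k(\theta_j) = 0 \).

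First I would introduce the Hermite interpolant \( h \): the unique polynomial of degree at most \( 2k-1 \) with \( h(\theta_j) = f(\theta_j) \) and \( h'(\theta_j) = f'(\theta_j) \) for \( j = 1,\dots,k \) (existence/uniqueness from the \( 2k \times 2k \) Hermite system). Since \( h \) matches \( f \) at every node, \( \int f \, \d\gq{k}{\mu} = \sum_j d_j f(\theta_j) = \sum_j d_j h(\theta_j) = \int h \, \d\gq{k}{\mu} \), and because \( \deg h \le 2k-1 \), exactness gives \( \int h \, \d\gq{k}{\mu} = \int h \, \d\mu \). Hence
\begin{align*}
    \int f \, \d\mu - \int f \, \d\gq{k}{\mu} = \int (f - h) \, \d\mu.
\end{align*}
Next I would invoke the Hermite interpolation error formula: for each \( s \in [a,b] \) there is \( \zeta(s) \in [a,b] \) with \( f(s) - h(s) = \frac{f^{(2k)}(\zeta(s))}{(2k)!}\prod_{j=1}^{k}(s-\theta_j)^2 \). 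This is the standard repeated-Rolle count applied to \( t \mapsto f(t) - h(t) - c\prod_j (t-\theta_j)^2 \) with \( c \) chosen so the value at \( s \) vanishes, using that each \( \theta_j \) is a double zero and that \( \prod_j(t-\theta_j)^2 \) is monic of degree \( 2k \). Since \( p_k \) is monic with simple roots \( \{\theta_j\} \), we have \( \prod_j(s-\theta_j)^2 = p_k(s)^2 \) (this is the quantity in the displayed integrand), so
\begin{align*}
    \int f \, \d\mu - \int f \, \d\gq{k}{\mu} = \frac{1}{(2k)!}\int f^{(2k)}(\zeta(s)) \, p_k(s)^2 \, \d\mu(s).
\end{align*}
Finally, since \( p_k(s)^2 \ge 0 \), I would pass from the pointwise bound \( \tfrac{m}{(2k)!}p_k(s)^2 \le f(s)-h(s) \le \tfrac{M}{(2k)!}p_k(s)^2 \) (with \( m = \inf_{[a,b]} f^{(2k)} \), \( M = \sup_{[a,b]} f^{(2k)} \)) to \( \tfrac{m}{(2k)!}\int p_k^2\,\d\mu \le \int(f-h)\,\d\mu \le \tfrac{M}{(2k)!}\int p_k^2\,\d\mu \), and then apply the intermediate value theorem to \( f^{(2k)} \) to produce \( \xi \in [a,b] \) with \( \int f^{(2k)}(\zeta(s))p_k(s)^2\,\d\mu(s) = f^{(2k)}(\xi)\int p_k(s)^2\,\d\mu(s) \), which is the claimed identity.

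The main obstacle is the regularity and well-definedness bookkeeping in the last step: \( s \mapsto \zeta(s) \) is not canonically defined, so a literal appeal to the mean value theorem for integrals needs care, and as literally stated the hypothesis ``\( f \) is \( 2k \) times differentiable'' does not give continuity of \( f^{(2k)} \). The cleanest resolution, which matches the cited source, is to work under \( f \in C^{2k}[a,b] \); then \( f^{(2k)} \) is bounded, attains \( m \) and \( M \), and the weighted-average argument above goes through verbatim. One should also note the degenerate case \( \int p_k^2\,\d\mu = 0 \), which happens exactly when \( \mu \) has at most \( k \) points of support; there \( \gq{k}{\mu} = \mu \) and both sides of the identity are \( 0 \), so there is nothing to prove and no division issue arises.
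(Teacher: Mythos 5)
This theorem is quoted in the paper from Golub and Meurant (Section 6.2) as a cited result without an internal proof, so there is nothing of the paper's own to compare against; your argument is the standard one underlying the cited source --- exactness of the \( k \)-point rule on polynomials of degree at most \( 2k-1 \), Hermite interpolation at the nodes \( \theta_j \) (the roots of \( p_k \)), the Hermite remainder formula, and a sandwich-plus-intermediate-value step --- and it is correct. Your two caveats are exactly the right ones: the statement should be read with \( f \in C^{2k}[a,b] \) (as in the source), under which the bounds \( \tfrac{m}{(2k)!}p_k(s)^2 \le f(s)-h(s) \le \tfrac{M}{(2k)!}p_k(s)^2 \) integrate against \( \mu \) with no measurability issue for \( s \mapsto \zeta(s) \), and the degenerate case \( \int p_k^2 \,\mathrm{d}\mu = 0 \) (when \( \mu \) has at most \( k \) support points, so \( \gq{k}{\mu}=\mu \)) makes the identity trivial. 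Your reading of the displayed integrand \( \prod_{i=1}^{k} p_k(s)^2 \) as \( p_k(s)^2 = \prod_{j=1}^{k}(s-\theta_j)^2 \) is also the intended one; the product over \( i \) as literally written is a typo in the statement.
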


\begin{ctheorem}[{\citep[Theorem 4.2]{ubaru_chen_saad_17}}]
    \label{thm:gq_analytic}
Let a function \( f \) be analytic in \( [-1,1] \) and analtically continuable in the open Bernstein ellipse \( E_{\rho} \) with foci \( \pm 1 \) and sum of major and minor axis equal to \( \rho > 1 \), where it satisfies \( |g(z)| \leq M_{\rho} \).
Then, the \( k \)-step Lanczos quadrature satisfies,
\begin{align*}
    \left| \int f(s) \d\Psi(s) - \int f(s) \d\gq{k}{\Psi}(s) \right|
    \leq \frac{4 M_{\rho}}{(\rho^2-1) \rho^{2(k-1)}}
\end{align*}
where \( \Psi = \Psi[\vec{A},\vec{v}] \) for some \( \vec{A} \) with eigenvalues in \( [-1,1] \) and a possibly deterministic right hand size \( \vec{v} \).
\end{ctheorem}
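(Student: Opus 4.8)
The plan is to reduce the bound to a classical polynomial-approximation estimate on the Bernstein ellipse. Two structural facts about the Gaussian quadrature rule $\gq{k}{\Psi}$ do the work: by definition it shares the moments of $\Psi$ up to degree $2k-1$, hence integrates every polynomial of degree at most $2k-1$ exactly; and (as the paper notes) it is itself a probability distribution function supported in $[-1,1]$, so its weights are nonnegative and sum to $1$. Writing $E(g) := \int g\,\d{\Psi} - \int g\,\d{\gq{k}{\Psi}}$ for the quadrature error functional, the first fact gives $E(p)=0$ for all $p$ with $\deg p \le 2k-1$, and the second (together with $\int \d{\Psi} = 1$) gives $|E(f)| = |E(f-p)| \le |\int (f-p)\,\d{\Psi}| + |\int (f-p)\,\d{\gq{k}{\Psi}}| \le 2\|f-p\|_{\infty,[-1,1]}$ for every such $p$. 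Minimising over $p$ reduces everything to bounding $E_{2k-1}(f)$, the error of best uniform polynomial approximation of degree $2k-1$ to $f$ on $[-1,1]$.

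For that, I would use the Chebyshev expansion $f = \sum_{j\ge 0} a_j T_j$. Analyticity of $f$ inside $E_\rho$ together with the bound $|f|\le M_\rho$ there yields, via Cauchy's formula on the ellipse, the standard coefficient decay $|a_j| \le 2 M_\rho\, \rho^{-j}$. Truncating the series after degree $2k-1$ and using $\|T_j\|_{\infty,[-1,1]} = 1$ gives $E_{2k-1}(f) \le \sum_{j\ge 2k} |a_j| \le 2 M_\rho \sum_{j\ge 2k}\rho^{-j} = 2 M_\rho\,\rho^{-(2k-1)}/(\rho-1)$. Combined with the previous paragraph this already yields $|E(f)| \le 4 M_\rho / (\rho^{2k-1}(\rho-1))$, which has the right exponential rate in $k$ and the right dependence on $M_\rho$.

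The main obstacle is squeezing out the exact constant $4 M_\rho/((\rho^2-1)\rho^{2(k-1)})$ stated in the theorem, which is strictly smaller than the $4M_\rho/(\rho^{2k-1}(\rho-1))$ that the crude ``twice-best-approximation'' bound produces (the two differ by a factor $\rho/(\rho+1)$). Closing that gap requires treating the quadrature error more delicately than ``$|E(T_j)|\le 2$'': the clean route is the contour-integral (kernel) representation of the Gauss remainder, $E(g) = \frac{1}{2\pi i}\oint_{\partial E_\rho} K_k(z)\,g(z)\,\d{z}$ with kernel $K_k = q_k/p_k$, where $p_k$ is the $k$-th orthogonal polynomial of $\Psi$ and $q_k$ its associated function of the second kind, together with the estimate $|K_k(z)| = O(\rho^{-2k})$ uniformly on $\partial E_\rho$; bounding $|E(f)| \le \frac{1}{2\pi}\,\mathrm{length}(\partial E_\rho)\,\max_{\partial E_\rho}|K_k|\,M_\rho$ and carefully tracking the semi-axis lengths of $E_\rho$ produces the $\rho^2-1$ in the denominator. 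I expect this last bookkeeping --- relating $\rho$ to the length of $\partial E_\rho$ and to the geometric tails --- to be the only genuinely fiddly part; the exponential rate and the role of $M_\rho$ fall out of the first two steps essentially for free.
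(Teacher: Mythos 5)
First, context: the paper does not prove this statement at all --- it is quoted verbatim (as Theorem 4.2 of Ubaru--Chen--Saad) only to support the supplement's discussion of why smooth-function Gaussian quadrature bounds are inadequate for spectrum approximation --- so there is no in-paper proof to compare against; you are reconstructing the cited proof. Your first two steps are correct and are the standard argument: exactness of \( \gq{k}{\Psi} \) on polynomials of degree at most \( 2k-1 \), the fact that both \( \Psi \) and \( \gq{k}{\Psi} \) are probability distribution functions supported in \( [-1,1] \), and the Chebyshev coefficient decay \( |a_j| \leq 2M_\rho \rho^{-j} \) give \( |E(f)| \leq 4M_\rho/\big((\rho-1)\rho^{2k-1}\big) \), which has the correct exponential rate and is weaker than the quoted constant only by the factor \( (\rho+1)/\rho \leq 2 \).

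The genuine gap is in the step you flag as ``fiddly'': the contour/kernel route does not, as sketched, deliver the \( (\rho^2-1) \) factor for the measures at hand. Here \( \Psi = \Psi[\vec{A},\vec{v}] \) is an arbitrary discrete probability measure on \( [-1,1] \), and the only generally available estimate of \( K_k = q_k/p_k \) on \( \partial E_\rho \) comes from applying the error functional to \( x \mapsto 1/(z-x) \) and expanding in Chebyshev polynomials with \( |E(T_j)| \leq 2 \); carrying that through (using \( \oint_{\partial E_\rho} |\d{z}|/|\sqrt{z^2-1}| = 2\pi \)) reproduces exactly the same bound \( 4M_\rho/\big((\rho-1)\rho^{2k-1}\big) \) as your elementary argument, not the stated one. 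The \( (\rho^2-1)\rho^{2k-2} \) form familiar from classical statements (e.g.\ Gauss--Legendre) exploits additional structure --- decay or vanishing of the moments \( \int T_j \,\d\mu \), which effectively thins the tail sum to every other index --- and a general weighted CESM has no such structure. So to land the constant exactly as quoted you would need to follow the cited paper's own accounting rather than a generic kernel estimate; alternatively, state your \( (\rho-1) \) version and note it differs from the quoted bound by a bounded factor, which is immaterial to how the theorem is used in the supplement (the argument there only needs the qualitative dependence of \( M_\rho \) and \( \rho \) on the smoothing width).
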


At first glance, \cref{thm:gq_smooth,thm:gq_analytic} suggest much faster convergence than \cref{thm:wasserstein_apriori}.
However, in practice, constants hidden by big-\( O \) notation are important.
Typical choices for \( \upsilon \), such as Gaussian densities, grow rapidly away from the real axis, meaning \( M_{\rho} \) grows as \( \rho \) becomes larger.
Similarly, the width of the interval over which \( \Upsilon \) goes from being near 0 to being near 1 is closely related to \( \rho \) and \( M_{\rho} \).
Indeed, in order to closely approximate a discontinuous function, the analytic approximations are necessarily poorly behaved in the complex plane.
This typically manifests as singularities which approach the real axis as the width over which the jump occurs decreases, which in turn, forces \( \rho \to 1 \) and \( M_{\rho} / (\rho^2 - 1) \to \infty \).

As a result, even though SLQ approximations to the smoothed EMS/CESM have exponential convergence, the number of iterations predicted by the bounds is typically not useful for the smoothing parameters used in practice.
As an explicit example, suppose \( \upsilon(s) = \frac{1}{\sqrt{2\pi \sigma^2}} \exp( - (s-x)^2 / 2\sigma^2 ) \) is a Gaussian density function, and that we wish to evaluate the smoothed ESM at \( x= 0 \).
Then, using that \( i(\rho + \rho^{-1})/2 \) is the uppermost point of \( E_{\rho} \),
\begin{align*}
    M_{\rho} \geq 
    \left| \upsilon\left( i \frac{\rho+\rho^{-1}}{2} \right) \right| 
    \geq \left| \upsilon\left( i \frac{\rho}{2} \right) \right|
    = \frac{1}{\sqrt{2\pi \sigma^2}} \exp \left( \frac{\rho^2}{8\sigma^2}  \right).
\end{align*}
Thus, to reduce the bound from \cref{thm:gq_analytic} to a fraction \( c \) of the maximum possible value of the smoothed density, \( \upsilon(0) \), we require
\begin{align*}
    \frac{4M_{\rho}}{(\rho^2-1)\rho^{2(k-1)}} < c \upsilon(0).
\end{align*}
Solving for \( k \) we have,
\begin{align*}
    k > 1 + \frac{ \ln\left( \frac{4 M_{\rho}}{c\upsilon(0)(\rho^2-1)}\right)}{2\ln(\rho/2)}
    = 1 + \frac{ \rho^2 / 8\sigma^2 +  \ln(4) - \ln(c\:(\rho^2-1))}{2\ln(\rho)}
\end{align*}
If \( \sigma^2 = 10^{-5} \) as suggested in \cite{ghorbani_krishnan_xiao_19}, then \( k \) must be exceedingly large for the bound to provide any meaningful convergence gurantee; even for an extremely weak bound of \( c = 1/2 \), we require \( k > 33900 \) iterations.
On the other hand, our bounds, while only inversely dependent on \( k \), are applicable to the parameter ranges found in practice; i.e. for a small number of Lanczos iterations.

\section{The Lanczos algorithm in finite precision}
\label{sec:finite_precision}

\begin{figure*}[t]
    \includegraphics[width=\textwidth]{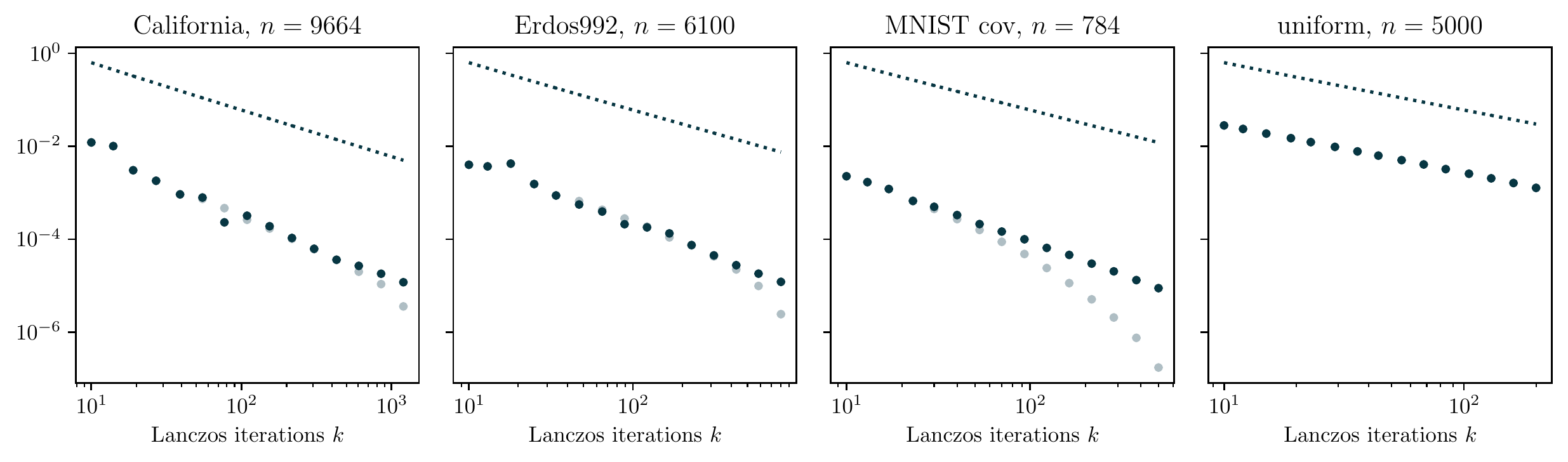}
    \caption{
        Wasserstein error \( \W(\Phi[\vec{A}],\samp{\gq{k}{\Psi_i}} \) (in finite preicsion without reorthogonalization) 
        ({\protect\raisebox{0mm}{\protect\includegraphics[scale=.7]{imgs/legend/circle.pdf}}}).
        All problems are scaled so that \( I(\vec{A}) = 1 \) for easier comparison.
        For comparison, the errors with reorthogonalization 
        ({\protect\raisebox{0mm}{\protect\includegraphics[scale=.7]{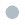}}})
        from \cref{fig:wasserstein_bounds} and the bound \( 12 I[\vec{A}] (2k-1)^{-1} \)
        ({\protect\raisebox{0mm}{\protect\includegraphics[scale=.7]{imgs/legend/dotted.pdf}}}),
        are also included.
    }
    \label{fig:fin_prec}
\end{figure*}

In finite precision, Lanczos may behave very differently than in exact arithmetic \cite{greenbaum_97,meurant_strakos_06}.
The most visible effects are a loss of orthogonality in the basis vectors \( [\vec{Q}]_{:,:k} \) and the appearence of multiple tightly clustered eigenvalues in \( [\vec{T}]_{:k,:k} \) all approximating isolated eigenvalues of \( \vec{A} \) (in exact arithmetic, each eigenvalue of \( \vec{A} \) would be approximated by at most one eigenvalue of \( [\vec{T}]_{:k,:k} \)).
This is due primarily to a loss of orthogonality in the basis vectors \( \{ \vec{q}_i \}_{i=1}^{k} \) caused by orthogonalizing \( \tilde{\vec{q}}_{i+1} \) against only \( \vec{q}_i \) and \( \vec{q}_{i-1} \) which occurs precisely when an eigenvalue of \( [\vec{T}]_{:k,:k} \) has converged to an eigenvalue of \( \vec{A} \) \cite{paige_76,paige_80}.

The most straightforward solution is to reorthogonalize against all previous basis vectors at each step.
This maintains orthogonality to near machine precision, albeit at the cost of increasing the runtime to \( O(k (T_{\text{mv}} + n + k)) \) and the required storage to \( O(nk) \).
Other more computationally efficient schemes, which orthogonalize selevatively against the basis vectors corresponding to converged eigenvalues of \( [\vec{T}]_{:k,:k} \) have also been studied \cite{parlett_scott_79}.

In the context of iterative algorithms, the repeated eigenvalues found by Lanczos often manifests as a delay of convergence; i.e. and increase in the number of iterations required to reach a given accuracy.
This behavior is observed in SLQ and illustrated in \Cref{fig:fin_prec}.
Note that the delay is worse on problems where there are isolated eigenvalues, especically in the upper spectrum.
This is because, in exact arithmetic, Lanczos will quickly find these eigenvalues, orthogonality will be lost, and multiple iterations will be wasted finding repeated copies of the eigenvalue.
On problems such as ``uniform'', where there are no isolated eigenvalues and \( k \ll n \), Lanczos behaves nearly identically in finite precision and exact arithmetic. 

As noted by \cite{papyan_19}, even in finite precision SLQ still seems to converge. 
We now give a high level explanation for this phenomenon.

Under certain conditions, \cite{greenbaum_89} shows that the matrix \( [\vec{T}]_{:k,:k} \) obtained by an implementation of the Lanczos algorithm run in finite precision can be viewed as the output the Lanczos algorithm run in exact arithmetic on a certain ``nearby'' problem. 
Loosely speaking, the necessary conditions for the analysis of \cite{greenbaum_89} to apply are that for all \( i \leq k \),
\begin{enumerate}[label=(\roman*),nolistsep]
    \item the finite precision Lanczos vectors are nearly unit length: \( \| \vec{q}_i \| \approx 1 \)
    \item that they are close locally orthogonal: \( \beta_i \vec{q}_i^\T \vec{q}_{i-1} \approx 1 \)
    \item that they approximately satisfy the three term recurrence: \( \vec{A}\vec{q}_{i} \approx \beta_{i} \vec{q}_{i+1} + \alpha_{i} \vec{q}_{i} +  \beta_{i-1} \vec{q}_{i-1} \).
\end{enumerate}
Such conditions are satisfied by \cref{alg:lanczos} (without reorthogonalization) in finite precision \cite{paige_76,paige_80}.

If these conditions are met, \cite{greenbaum_89} shows that there exists a \( N\times N \) matrix \( \bar{\vec{A}} \) and vector \( \bar{\vec{v}} \) such that Lanczos run on \( \bar{\vec{A}} \), \( \bar{\vec{v}} \) in exact arithmetic for \( k \) steps produces \( [\vec{T}]_{:k,:k} \) and, that
\begin{enumerate}[label=(\roman*),nolistsep]
    \item  
        Eigenvalues of \( \bar{\vec{A}} \) clustered near to those of \( \vec{A} \):
        for any \( j \in \{1,\ldots, N \} \), there exists \( i\in \{1,\ldots, n\}  \) such that 
    \begin{align*}
        \lambda_j(\bar{\vec{A}}) \approx \lambda_i(\vec{A}).
    \end{align*}

    \item 
        The sum of square of first components of eigenvectors of clusters of \( \bar{\vec{A}} \) are near to the square of the projections of \( \vec{v} \) onto the eigenvectors of \( \vec{A} \):
        for an eigenvalue \( \lambda_i(\vec{A}) \)
        \begin{align*}
            w_i \approx \sum_{j\in S} \bar{w}_j
        \end{align*}
        where \( S_i \) is the set of indices such that \( \lambda_j(\bar{\vec{A}}) \approx \lambda_i(\vec{A}) \) for all \( j\in S \).
\end{enumerate}

Together, these conditions imply that
\begin{align}
    \label{eqn:finprec_wCESM}
    \Psi[\vec{A},\vec{v}]
    \approx \Psi(\bar{\vec{A}},\bar{\vec{v}}).
\end{align}
Since the \( [\vec{T}]_{:k,:k} \) produced by the finite precision computation corresponds to an exact Gaussian quadrature rule for \( \Psi(\bar{\vec{A}},\bar{\vec{v}}) \), the theory developed in this paper applies to the exact computation with \( \bar{\vec{A}} \) and \( \bar{\vec{v}} \).
Of course, what the ``\( \approx \)'' in \labelcref{eqn:finprec_wCESM} means depend on the details of both the analysis in \cite{greenbaum_89} and implementation of the Lanczos algorithm used. 
If, for instance, the Wasserstein distance between \( \Psi[\vec{A},\vec{v}] \) and \( \Psi(\bar{\vec{A}},\bar{\vec{v}}) \) is not reasonably small for the precision used, then knowing that the \( [\vec{T}]_{:k,:k} \) corresponds to an exact computation is with \( \bar{\vec{A}},\bar{\vec{v}} \) is not useful in terms of our bounds.


\end{document}